\newcounter{firstbib} 
\apptocmd{\thebibliography}{ \setcounter{NAT@ctr}{\value{firstbib}} }{}{} 
\newtheorem{theorem}{Theorem}[section]
\newtheorem{lemma}[theorem]{Lemma}
\newtheorem{proposition}[theorem]{Proposition}
\newcommand{\SO}{\mathit{SO}}
\begin{document}

\title{
%How to Tie %Topologically and Dynamically 
%Absolutely Stable 
%Non-Abelian 
%Quantum
%Knots %and Links 
%in Ultracold Gases
%\\ 
% Non-Abelian 
Quantum
Knots that Never Come Untied
}
\author{
Michikazu Kobayashi}
\affiliation{
School of Engineering Science, Kochi University of Technology, Miyanoguchi 185, Tosayamada, Kami, Kochi 782-5232, Japan
}
\affiliation{
International Institute for Sustainability with Knotted Chiral Meta Matter (WPI-SKCM$^2$),
Hiroshima University, 1-3-1 Kagamiyama, Higashi-Hiroshima, Hiroshima 739-8526, Japan
}
%\affiliation{
%International Institute for Sustainability with Knotted Chiral Meta Matter(SKCM$^2$), Hiroshima University, 1-3-2 Kagamiyama, Higashi-Hiroshima, Hiroshima 739-8511, Japan
%}
\author{
Yuta Nozaki}
\affiliation{
Faculty of Environment and Information Sciences, Yokohama National University, 
79-7 Tokiwadai, Hodogaya-ku, Yokohama, 240-8501, 
Japan
}
\affiliation{
International Institute for Sustainability with Knotted Chiral Meta Matter (WPI-SKCM$^2$), Hiroshima University, 
1-3-1 Kagamiyama, Higashi-Hiroshima, Hiroshima 739-8526, Japan
}
\author{
Yuya Koda}
\affiliation{
Department of Mathematics (Hiyoshi Campus), Keio University, 4-1-1 Hiyoshi, Yokohama, Kanagawa 223-8521, Japan}
\affiliation{
Research and Education Center for Natural Sciences, Keio University, 4-1-1 Hiyoshi, Yokohama, Kanagawa 223-8521, Japan}
\affiliation{
International Institute for Sustainability with Knotted Chiral Meta Matter (WPI-SKCM$^2$), Hiroshima University, 
1-3-1 Kagamiyama, Higashi-Hiroshima, Hiroshima 739-8526, Japan
}
\author{
Muneto Nitta}
\affiliation{Department of Physics,  Keio University, 4-1-1 Hiyoshi, Yokohama, Kanagawa 223-8521, Japan}
\affiliation{
Research and Education Center for Natural Sciences, Keio University, 4-1-1 Hiyoshi, Yokohama, Kanagawa 223-8521, Japan}
\affiliation{
International Institute for Sustainability with Knotted Chiral Meta Matter (WPI-SKCM$^2$), Hiroshima University, 
1-3-1 Kagamiyama, Higashi-Hiroshima, Hiroshima 739-8526, Japan
}

%\date{}

\begin{abstract}

Lord Kelvin proposed that atoms form hydrodynamic vortex knots.
However, they typically untie through reconnections, {\it i.~e.}, local cut-and-slice events, unlike stable vortex unknots such as smoke rings. The same holds in superfluids—quantum fluids with zero viscosity—where vortices have quantized circulation, making them topologically stable. For over 150 years, hydrodynamically stable vortex knots have been sought both experimentally and theoretically. 
Here, we present the first demonstration of hydrodynamically stable vortex knots and links in experimentally realizable Bose-Einstein condensates of ultracold atomic gases and confirm it through dynamic simulations. Our method creates stable knotted vortex structures in systems where reconnections are prohibited, with potential relevance to neutron star interiors. Additionally, we anticipate our mathematical framework could have applications in quantum computation, quantum turbulence, and DNA dynamics, particularly where reconnections are restricted.
% \textcolor{red}{(133 words)}

\end{abstract}

\maketitle

%%%%%%%%%%%%%%%%%%%%%%%%%%%%
%\section{Introduction}

Following Helmholtz's seminal work on vortex motion \cite{Helmholtz+1858+25+55}, Lord Kelvin proposed in his pioneering work \cite{Thomson:1869} that atoms, the fundamental building blocks of matter, are vortex knots in the aether (see the top panel of Fig.~\ref{fig:main-results}). Although this conjecture garnered significant attention from scientists, it was ultimately ruled out with the advent of modern physics, which established that atoms consist of quarks and gluons. However, Kelvin's idea spurred the development of knot theory \cite{tait1898knots} as a distinct mathematical discipline. Since then, knots have been observed in numerous physical, chemical, and biological systems \cite{kauffman1991knots}, including hydrodynamics \cite{Moffatt_1969,10.1063/1.881574,Ricca:2009,Kleckner:2013,Arnold:2021}, DNA and proteins \cite{Wasserman}, superconductors \cite{Babaev:2001zy,Rybakov:2018ktd}, Bose-Einstein condensates (BECs) \cite{PhysRevE.85.036306,Kleckner_2016,Kawaguchi:2008xi,Hall:2016}, %$^3$He 
superfluids \cite{Volovik:1977}, liquid crystals \cite{PhysRevLett.110.237801,Ackerman:2015,Ackerman:2017,Tai:2019,RevModPhys.84.497,Smalyukh:2020zin}, colloids \cite{TKALEC,Martinez}, magnets \cite{Kent:2020jvm}, active matter \cite{Shankar2022}, nonlinear science \cite{Maucher:2018wco}, optics \cite{Dennis2010}, electromagnetism \cite{Kedia:2013bw,Arrayas:2017sfq}, elementary particle physics \cite{Eto:2024hwn}, and quantum chromodynamics \cite{Faddeev:1996zj}.

Kelvin's idea of vortex knots as atoms was rooted in his circulation theorem \cite{Thomson:1869}, which posits that knotted or linked vortex lines remain preserved over time in inviscid and barotropic perfect fluids. This led to extensive studies of vortex knots in hydrodynamics \cite{Moffatt_1969,10.1063/1.881574,Ricca:2009,Kleckner:2013} (see \cite{Arnold:2021} for a review), with experimental realizations in water \cite{Kleckner:2013}. However, these vortex knots typically untie due to viscosity, prompting the question of whether stable hydrodynamic vortex knots exist.

In quantum physics, vortices exhibit quantized circulation in superfluids, such as helium superfluids and BECs of ultracold atomic gases. These quantum hydrodynamic fluids have zero viscosity, which allows for the stability of minimally quantized vortices, similar to how hydrogen atoms are stable in quantum mechanics. Nevertheless, quantum vortex knots remain unstable, typically breaking apart through reconnections \cite{PhysRevE.85.036306}. The number of reconnections required to untie a vortex knot is mathematically related to a knot invariant \cite{Kleckner_2016}.

\begin{figure*}[htb]
\centering
\includegraphics[width=0.85\textwidth]{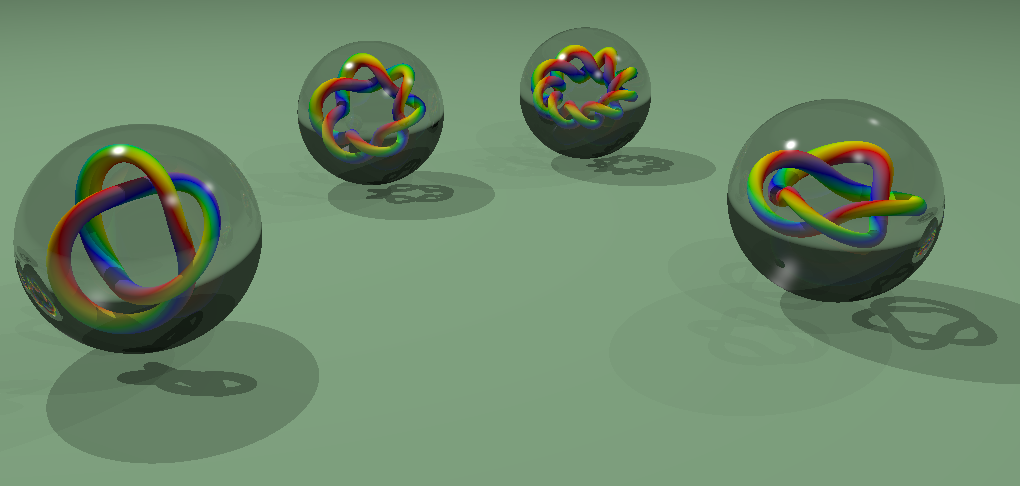}\\[10pt]
\begin{minipage}{0.03\linewidth}
    \begin{tikzpicture}
        \draw[->,>=Stealth,line width=1.5](0,0)--(0,7.2) node[midway,above,rotate=90]{time};
    \end{tikzpicture}
\end{minipage}
\begin{minipage}{0.24\linewidth}
  \centering
  (a) \\
  \includegraphics[height=2.0\linewidth]{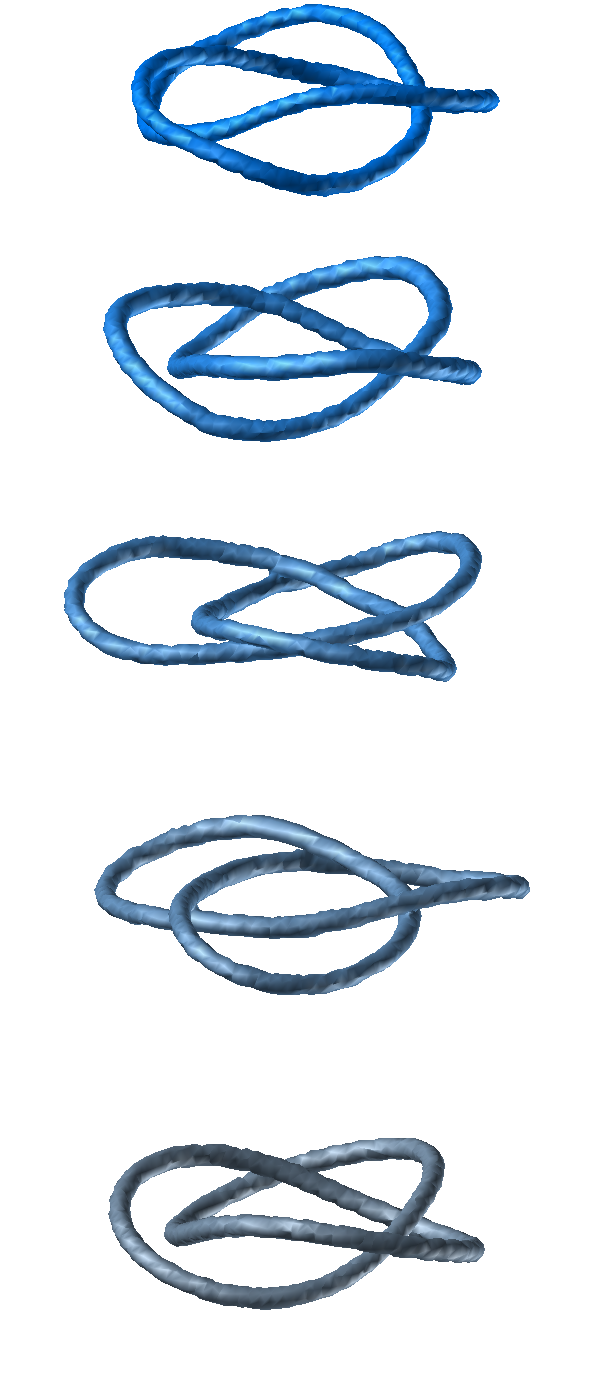}
\end{minipage}\hspace{-15pt}
\begin{minipage}{0.24\linewidth}
  \centering
  (b) \\
  \includegraphics[height=2.0\linewidth]{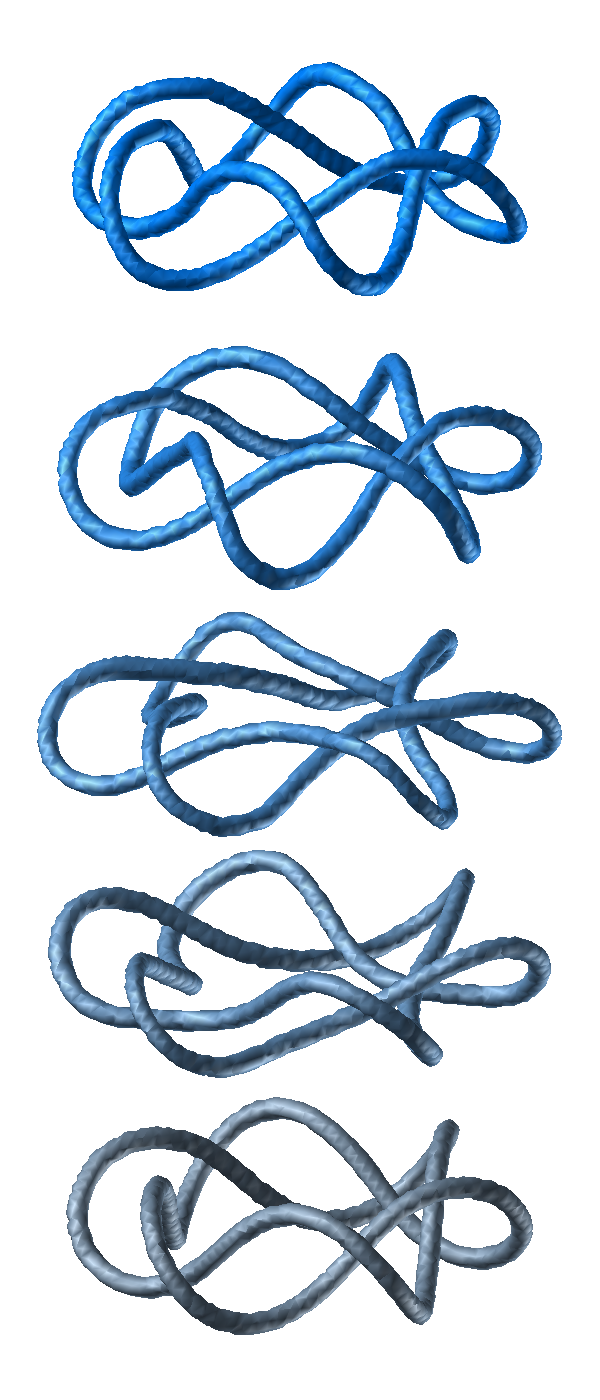}
\end{minipage}\hspace{-15pt}
\begin{minipage}{0.24\linewidth}
  \centering
  (c) \\
  \includegraphics[height=2.0\linewidth]{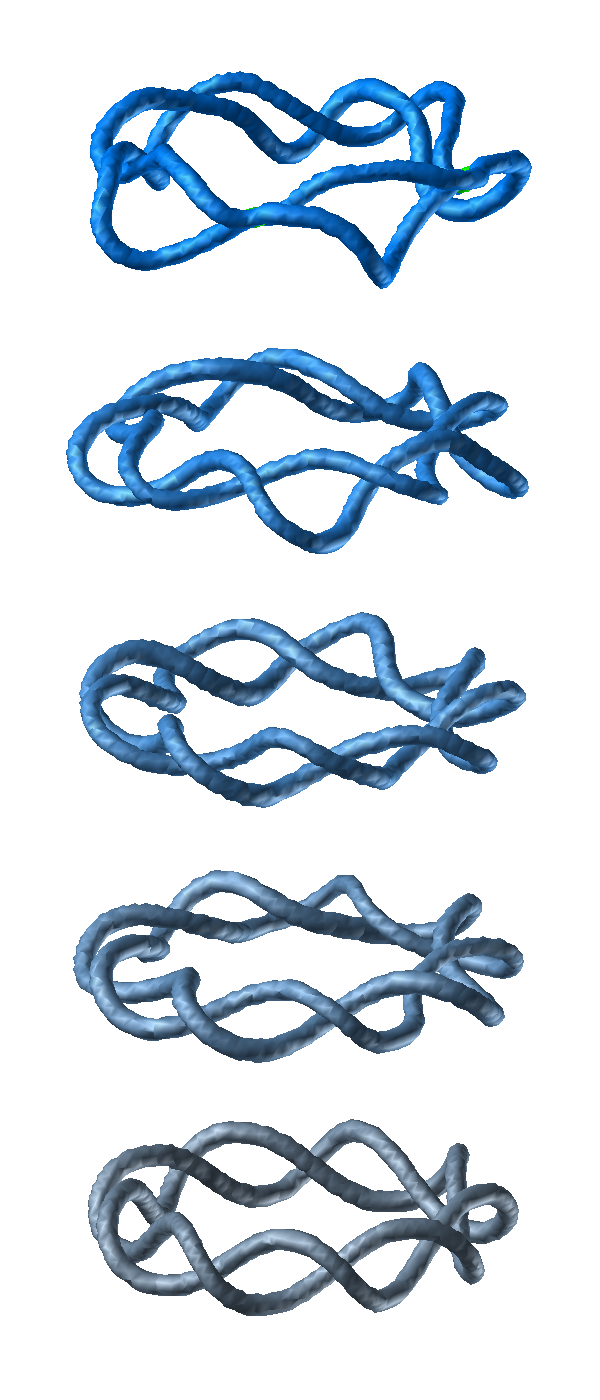}
\end{minipage}\hspace{-15pt}
\begin{minipage}{0.24\linewidth}
  \centering
  (d) \\
  \includegraphics[height=2.0\linewidth]{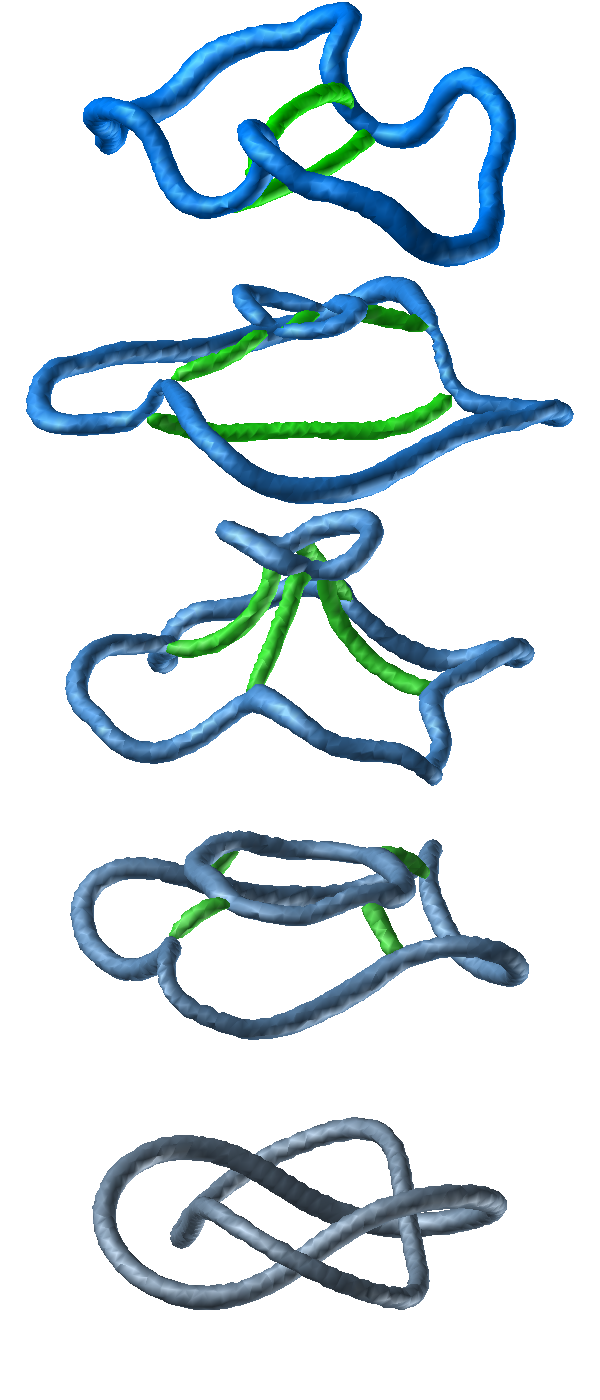}
\end{minipage}
\caption{
\label{fig:main-results}
Upper panel: Image of Lord Kelvin's vortex atom model, where atoms are represented as vortex knots in the aether fluid.
Lower panel: Dynamics of stable and metastable vortex knots and links in spin-2 spinor BECs, corresponding to the vortex atom model shown in the upper panel.
(a): A trefoil knot ($(2,3)$-torus knot),
(b): A $(2,6)$-torus link,
(c): A $(2,9)$-torus knot in the cyclic phase, and
(d): A $(2,4)$-torus link in the nematic phase
(see Supplementary Movies 4, 9, 11, and 19, respectively).
All vortex knots and links move from bottom to top over time at constant velocities. The blue curves represent hydrodynamic vortices, while the green curves depict non-hydrodynamic (rung) vortices.
(a) is the unique stable knot.
(b) and (c) are constructed by repeating the braid for the trefoil in panel (a) twice and three times, respectively, and taking their closures.
(d) is metastable.}
\end{figure*}
Quantized vortices are characterized by the fundamental group of the order parameter manifold of the system. When this group is non-Abelian, its elements are generally non-commutative, meaning that vortices do not reconnect in the traditional sense and instead create a vortex bridge called a rung when they collide \cite{Poenaru1977,Mermin:1979zz}. Examples of non-Abelian vortices occur in biaxial nematic liquid crystals and in spinor BECs which are BECs with spin degrees of freedom \cite{Kawaguchi:2012ii} and experimentally realized in ultracold $^{87}$Rb atoms \cite{PhysRevA.80.042704}. Another example is $P$-wave neutron superfluids in neutron stars \cite{Chamel2017,Haskell:2017lkl,Sedrakian:2018ydt}. In spinor BECs \cite{Semenoff:2006vv,Kobayashi:2008pk,Borgh:2016cco,Mawson:2018klj,Kawaguchi:2012ii}, non-Abelian vortices form rungs during collisions, and while vortex knots have been theoretically discussed \cite{Annala_2022}, no conclusive statements on their stability have been made.

Here, we demonstrate for the first time the existence of stable vortex knots and links by exhaustively analyzing all possible knots and links in experimentally realizable BECs, specifically spinor BECs with total spin up to two. We confirm their stability through extensive numerical simulations. As shown in Fig.~\ref{fig:main-results}, a unique stable vortex knot (see Supplementary Movie 4) exists in the cyclic phase of a spin-2 BEC. Additionally, we observe several metastable knots and links (see Supplementary Movies 9, 11, and 19) in both the cyclic and nematic phases of spin-2 BECs. These results mark a significant breakthrough in the study of vortex dynamics, providing the first concrete evidence of hydrodynamically stable vortex knots in quantum fluids.
% \red{(486 + 4 - 2 + 1 words)}

% \sout{Our findings open new avenues for further research into the behavior of vortex knots in other quantum systems. Potential applications include quantum turbulence, quantum computation, and even biological systems like DNA, where knot reconnections may play a crucial role. Moreover, these stable vortex structures may provide insights into astrophysical phenomena, such as the dynamics of neutron stars, where similar vortex behavior could occur in the presence of a superfluid.
% \red{(554+4-2 words)}}

%%%%%%%%%%%%%%%%%%
\section*{Stability of vortex knots and links}

\begin{figure}[htb]
\centering
\begin{minipage}{0.49\linewidth}
  \centering
  (a) \\[-5pt]
  \begin{tikzpicture}
  \draw[->, >=Stealth](0,0)--(0,5) node[right]{$t$};
  \draw[->, >=Stealth](0,0)--(1,0) node[right]{$x$};
  \draw[->, >=Stealth](0,0)--(0.6,0.4) node[right]{$y$};
  \filldraw[red] (1.5,0.3) circle(3pt) node[above,left]{$+$};
  \filldraw[blue] (3.0,0.3) circle(3pt) node[above,right]{$+$};
  \draw[->, >=Triangle, line width=1.5](2.85,0.35) arc(40:150:0.75 and 0.55);
  \draw[->, >=Triangle, line width=1.5](1.65,0.25) arc(220:330:0.75 and 0.55);
  \draw[blue, line width=1.5, domain = 0:80] plot({2.25+0.75*cos(\x)}, {0.3+4.5*\x/360});
  \draw[blue, line width=1.5, domain = 100:360] plot({2.25+0.75*cos(\x)}, {0.3+4.5*\x/360});
  \draw[red, line width=1.5, domain = 0:260] plot({2.25-0.75*cos(\x)}, {0.3+4.5*\x/360});
  \draw[red, line width=1.5, domain = 280:360] plot({2.25-0.75*cos(\x)}, {0.3+4.5*\x/360});
\end{tikzpicture}
\end{minipage}
\begin{minipage}{0.49\linewidth}
  \centering
  (b) \\[-5pt]
  \begin{tikzpicture}
  \draw[->, >=Stealth](0,0)--(0,5) node[right]{$-t$};
  \draw[->, >=Stealth](0,0)--(1,0) node[right]{$x$};
  \draw[->, >=Stealth](0,0)--(0.6,0.4) node[right]{$y$};
  \filldraw[red] (1.5,0.3) circle(3pt) node[above,left]{$+$};
  \filldraw[blue] (3.0,0.3) circle(3pt) node[above,right]{$+$};
  \draw[->, >=Triangle, line width=1.5](1.65,0.35) arc(140:30:0.75 and 0.55);
  \draw[->, >=Triangle, line width=1.5](2.85,0.25) arc(320:210:0.75 and 0.55);
  \draw[blue, line width=1.5, domain = 0:260] plot({2.25+0.75*cos(\x)}, {0.3+4.5*\x/360});
  \draw[blue, line width=1.5, domain = 280:360] plot({2.25+0.75*cos(\x)}, {0.3+4.5*\x/360});
  \draw[red, line width=1.5, domain = 0:80] plot({2.25-0.75*cos(\x)}, {0.3+4.5*\x/360});
  \draw[red, line width=1.5, domain = 100:360] plot({2.25-0.75*cos(\x)}, {0.3+4.5*\x/360});
\end{tikzpicture}
\end{minipage}
\caption{
\label{fig:braid_xyt}
Dynamics of two point vortices in the 2-dimensional $xy$-plane and a vortex braid in the $(2+1)$-dimensional $xyt$-spacetime.
When two point vortices have hydrodynamic circulations with positive orientation, they rotate around each other counterclockwise (clockwise) toward the future (past) in panel (a) (panel (b)), forming a positive (negative) braid in $xyt$-spacetime.
The braid's orientation reverses for vortices with negative hydrodynamic circulation.
}
\end{figure}
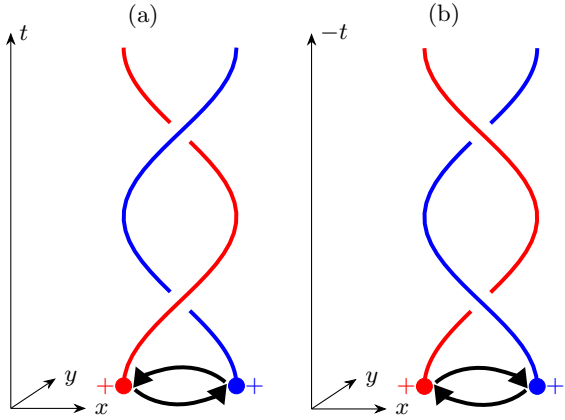
In order to construct vortex knots and links, we start by considering vortex braids that correspond to those knots and links as their closures. To begin, let us consider a hypothetical two-dimensional plane where vortices are point-like. Two point vortices with hydrodynamic circulations of the same orientation rotate around each other, forming a braid in $(2+1)$-dimensional spacetime (see Fig.~\ref{fig:braid_xyt}). Two point vortices with positive circulations rotate counterclockwise in the future direction, as shown in Fig.~\ref{fig:braid_xyt}(a), or clockwise in the past, as seen in Fig.~\ref{fig:braid_xyt}(b), forming a positive or negative braid, respectively.

\begin{figure*}[htb]
\centering
\begin{minipage}{0.49\linewidth}
  \centering
  (a) \\[-5pt]
  \begin{tikzpicture}
  \draw[->, >=Stealth](0,0)--(0,5) node[right]{$z$};
  \draw[->, >=Stealth](0,0)--(1,0) node[right]{$x$};
  \draw[->, >=Stealth](0,0)--(0.6,0.4) node[right]{$y$};
  % \filldraw[red] (1.5,0.3) circle(3pt) node[above,left]{$+$};
  % \filldraw[blue] (3.0,0.3) circle(3pt) node[above,right]{$+$};
  % \draw[->, >=Triangle, line width=1.5](2.85,0.35) arc(40:150:0.75 and 0.55);
  % \draw[->, >=Triangle, line width=1.5](1.65,0.25) arc(220:330:0.75 and 0.55);
  \draw[blue, line width=1.5, domain = 0:80] plot({2.25+0.75*cos(\x)}, {0.3+4.5*\x/360});
  \draw[blue, line width=1.5, domain = 100:360] plot({2.25+0.75*cos(\x)}, {0.3+4.5*\x/360});
  \draw[red, line width=1.5, domain = 0:260] plot({2.25-0.75*cos(\x)}, {0.3+4.5*\x/360});
  \draw[red, line width=1.5, domain = 280:360] plot({2.25-0.75*cos(\x)}, {0.3+4.5*\x/360});
  \draw[->, >=Stealth, line width = 2] (3.5,2.5)--(4.5,2.5);
  \path(2.25-0.75,4.8)--(2.25+0.75,4.8) node[midway,above]{$t=0$};

  \begin{scope}[xshift=100]
    \draw[blue, line width=1.5, domain = 0:25] plot({2.25+0.75*cos(\x+55)}, {0.3+4.5*\x/360});
    \draw[blue, line width=1.5, domain = 45:360] plot({2.25+0.75*cos(\x+55)}, {0.3+4.5*\x/360});
    \draw[red, line width=1.5, domain = 0:205] plot({2.25-0.75*cos(\x+55)}, {0.3+4.5*\x/360});
    \draw[red, line width=1.5, domain = 225:360] plot({2.25-0.75*cos(\x+55)}, {0.3+4.5*\x/360});
    \path(2.25-0.75,4.8)--(2.25+0.75,4.8) node[midway,above]{$t>0$};

    \draw[->, >=Triangle, dashed, line width=1] (3.5,0.3+3.25)--(3.5,0.3+1.25);
  \end{scope}
\end{tikzpicture}
\end{minipage}
\begin{minipage}{0.49\linewidth}
  \centering
  (b) \\[-5pt]
  \begin{tikzpicture}
  \draw[->, >=Stealth](0,0)--(0,5) node[right]{$z$};
  \draw[->, >=Stealth](0,0)--(1,0) node[right]{$x$};
  \draw[->, >=Stealth](0,0)--(0.6,0.4) node[right]{$y$};
  % \filldraw[red] (1.5,0.3) circle(3pt) node[above,left]{$+$};
  % \filldraw[blue] (3.0,0.3) circle(3pt) node[above,right]{$+$};
  % \draw[->, >=Triangle, line width=1.5](2.85,0.35) arc(40:150:0.75 and 0.55);
  % \draw[->, >=Triangle, line width=1.5](1.65,0.25) arc(220:330:0.75 and 0.55);
  \draw[blue, line width=1.5, domain = 0:260] plot({2.25+0.75*cos(\x)}, {0.3+4.5*\x/360});
  \draw[blue, line width=1.5, domain = 280:360] plot({2.25+0.75*cos(\x)}, {0.3+4.5*\x/360});
  \draw[red, line width=1.5, domain = 0:80] plot({2.25-0.75*cos(\x)}, {0.3+4.5*\x/360});
  \draw[red, line width=1.5, domain = 100:360] plot({2.25-0.75*cos(\x)}, {0.3+4.5*\x/360});
  \draw[->, >=Stealth, line width = 2] (3.5,2.5)--(4.5,2.5);
  \path(2.25-0.75,4.8)--(2.25+0.75,4.8) node[midway,above]{$t=0$};

  \begin{scope}[xshift=100]
    \draw[blue, line width=1.5, domain = 0:315] plot({2.25+0.75*cos(\x-55)}, {0.3+4.5*\x/360});
    \draw[blue, line width=1.5, domain = 330:360] plot({2.25+0.75*cos(\x-55)}, {0.3+4.5*\x/360});
    \draw[red, line width=1.5, domain = 0:135] plot({2.25-0.75*cos(\x-55)}, {0.3+4.5*\x/360});
    \draw[red, line width=1.5, domain = 155:360] plot({2.25-0.75*cos(\x-55)}, {0.3+4.5*\x/360});
    \path(2.25-0.75,4.8)--(2.25+0.75,4.8) node[midway,above]{$t>0$};

    \draw[->, >=Triangle, dashed, line width=1] (3.5,0.3+1.25)--(3.5,0.3+3.25);
  \end{scope}
\end{tikzpicture}
\end{minipage}
\caption{
\label{fig:braid_xyz}
Dynamics of positive (negative) braiding of two vortex lines in panel (a) (panel (b)).
Dashed arrows show the direction of the braid evolution.
}
\end{figure*}
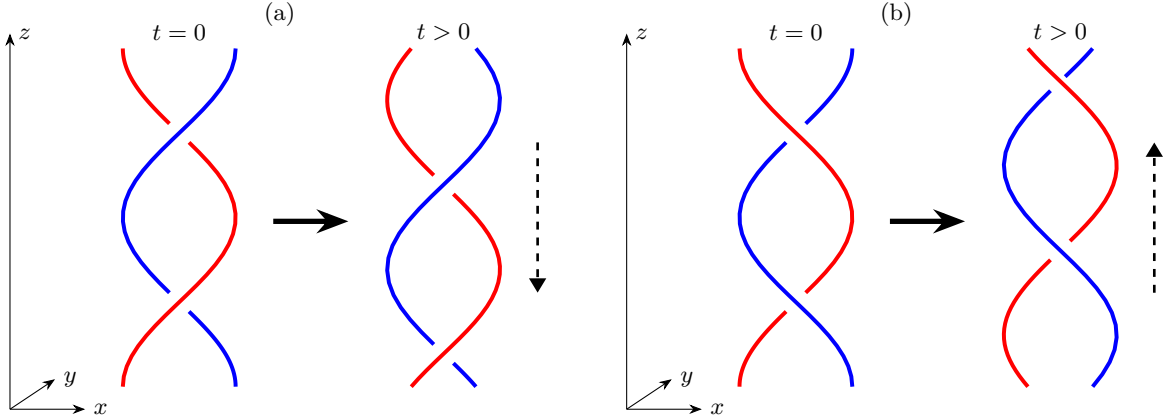
We can naturally extend this concept of vortex braiding into 3-dimensional space, similar to the braids seen in $(2+1)$-dimensional spacetime. Fig.~\ref{fig:braid_xyz}(a) shows the positive braiding dynamics of two vortex lines, where the direction of braiding mirrors Fig.~\ref{fig:braid_xyt}(a), but with the $z$-axis replaced by time. In each constant $z$-slice, two-point vortices at their intersections rotate counterclockwise over time, causing the braid to move in the negative direction of the $z$-axis. Conversely, the negative braiding of vortex lines, as shown in Fig.~\ref{fig:braid_xyz}(b), moves in the positive direction of the $z$-axis. As a result, vortex braids in 3-dimensional space propagate in a fixed direction.

\begin{figure*}[htb]
  \centering
  \begin{minipage}[t]{0.245\linewidth}
    \centering (a) \\[5pt]
    \begin{tikzpicture}
    \begin{scope}[rotate=60]
        \draw[blue, line width = 1.5, domain = 0:110] plot({(1+0.25*sin(1.5*\x))*cos(\x)}, {(1+0.25*sin(1.5*\x))*sin(\x)});
        \draw[green, line width = 1.5, domain = 130:295] plot({(1+0.25*sin(1.5*\x))*cos(\x)}, {(1+0.25*sin(1.5*\x))*sin(\x)});
        \draw[green, line width = 1.5, domain = 305:350] plot({(1+0.25*sin(1.5*\x))*cos(\x)}, {(1+0.25*sin(1.5*\x))*sin(\x)});
        \draw[red, line width = 1.5, domain = 10:230] plot({(1-0.25*sin(1.5*\x))*cos(\x)}, {(1-0.25*sin(1.5*\x))*sin(\x)});
        \draw[blue, line width = 1.5, domain = 250:295] plot({(1-0.25*sin(1.5*\x))*cos(\x)}, {(1-0.25*sin(1.5*\x))*sin(\x)});
        \draw[blue, line width = 1.5, domain = 305:360] plot({(1-0.25*sin(1.5*\x))*cos(\x)}, {(1-0.25*sin(1.5*\x))*sin(\x)});

        \draw[blue, ->, >=Triangle, line width = 1.5, domain = 40:50] plot({(1+0.25*sin(1.5*\x))*cos(\x)}, {(1+0.25*sin(1.5*\x))*sin(\x)});
        \draw[green, ->, >=Triangle, line width = 1.5, domain = 280:290] plot({(1+0.25*sin(1.5*\x))*cos(\x)}, {(1+0.25*sin(1.5*\x))*sin(\x)});
        \draw[red, ->, >=Triangle, line width = 1.5, domain = 160:170] plot({(1-0.25*sin(1.5*\x))*cos(\x)}, {(1-0.25*sin(1.5*\x))*sin(\x)});
    \end{scope}
    \draw[line width=1](0.2,0)--(0.4,0.3);
    \draw[line width=1](0.2,0)--(1.45,0)--(1.65,0.3);
    \draw[line width=1](0.4,0.3)--(0.6,0.3);
    \draw[line width=1](0.9,0.3)--(1.05,0.3);
    \draw[line width=1](1.35,0.3)--(1.65,0.3);
    \draw[->, >=Triangle, line width=1, dashed, domain=5:60] plot({1.55*cos(\x)}, {1.55*sin(\x)});
    \path ({1.55*cos(60)}, {1.55*sin(60)}) node[above=4,right=4]{$\theta$};
\end{tikzpicture}
  \end{minipage}
  \begin{minipage}[t]{0.245\linewidth}
    \centering (c) \\[5pt]
    \begin{tikzpicture}
  \begin{scope}[rotate=45]
    \draw[blue, line width = 1.5, domain = -80:-50] plot({(1+0.25*sin(2*\x))*cos(\x)}, {(1+0.25*sin(2*\x))*sin(\x)});
    \draw[blue, line width = 1.5, domain = -40:80] plot({(1+0.25*sin(2*\x))*cos(\x)}, {(1+0.25*sin(2*\x))*sin(\x)});
    \draw[green, line width = 1.5, domain = 100:260] plot({(1+0.25*sin(2*\x))*cos(\x)}, {(1+0.25*sin(2*\x))*sin(\x)});
    \draw[red, line width = 1.5, domain = 10:170] plot({(1-0.25*sin(2*\x))*cos(\x)}, {(1-0.25*sin(2*\x))*sin(\x)});
    \draw[violet, line width = 1.5, domain = 190:310] plot({(1-0.25*sin(2*\x))*cos(\x)}, {(1-0.25*sin(2*\x))*sin(\x)});
    \draw[violet, line width = 1.5, domain = 320:350] plot({(1-0.25*sin(2*\x))*cos(\x)}, {(1-0.25*sin(2*\x))*sin(\x)});

    \draw[blue, ->, >=Triangle, line width = 1.5, domain = 20:30] plot({(1+0.25*sin(2*\x))*cos(\x)}, {(1+0.25*sin(2*\x))*sin(\x)});
    \draw[green, ->, >=Triangle, line width = 1.5, domain = 200:210] plot({(1+0.25*sin(2*\x))*cos(\x)}, {(1+0.25*sin(2*\x))*sin(\x)});
    \draw[red, ->, >=Triangle, line width = 1.5, domain = 110:120] plot({(1-0.25*sin(2*\x))*cos(\x)}, {(1-0.25*sin(2*\x))*sin(\x)});
    \draw[violet, ->, >=Triangle, line width = 1.5, domain = 290:300] plot({(1-0.25*sin(2*\x))*cos(\x)}, {(1-0.25*sin(2*\x))*sin(\x)});
  \end{scope}
  \draw[line width=1](0.2,0)--(0.4,0.3);
  \draw[line width=1](0.2,0)--(1.45,0)--(1.65,0.3);
  \draw[line width=1](0.4,0.3)--(0.6,0.3);
  \draw[line width=1](0.9,0.3)--(1.05,0.3);
  \draw[line width=1](1.35,0.3)--(1.65,0.3);
  \draw[->, >=Triangle, line width=1, dashed, domain=5:60] plot({1.55*cos(\x)}, {1.55*sin(\x)});
  \path ({1.55*cos(60)}, {1.55*sin(60)}) node[above=4,right=4]{$\theta$};
\end{tikzpicture}
  \end{minipage}
  \begin{minipage}[t]{0.245\linewidth}
    \centering (e) \\[5pt]
    \begin{tikzpicture}
  \begin{scope}[rotate=45]
    \draw[blue, line width = 1.5, domain = -80:-50] plot({(1+0.25*sin(2*\x))*cos(\x)}, {(1+0.25*sin(2*\x))*sin(\x)});
    \draw[blue, line width = 1.5, domain = -40:80] plot({(1+0.25*sin(2*\x))*cos(\x)}, {(1+0.25*sin(2*\x))*sin(\x)});
    \draw[green, line width = 1.5, domain = 100:260] plot({(1+0.25*sin(2*\x))*cos(\x)}, {(1+0.25*sin(2*\x))*sin(\x)});
    \draw[red, line width = 1.5, domain = 170:10] plot({(1-0.25*sin(2*\x))*cos(\x)}, {(1-0.25*sin(2*\x))*sin(\x)});
    \draw[violet, line width = 1.5, domain = 190:310] plot({(1-0.25*sin(2*\x))*cos(\x)}, {(1-0.25*sin(2*\x))*sin(\x)});
    \draw[violet, line width = 1.5, domain = 320:350] plot({(1-0.25*sin(2*\x))*cos(\x)}, {(1-0.25*sin(2*\x))*sin(\x)});

    \draw[blue, ->, >=Triangle, line width = 1.5, domain = 20:30] plot({(1+0.25*sin(2*\x))*cos(\x)}, {(1+0.25*sin(2*\x))*sin(\x)});
    \draw[green, ->, >=Triangle, line width = 1.5, domain = 200:210] plot({(1+0.25*sin(2*\x))*cos(\x)}, {(1+0.25*sin(2*\x))*sin(\x)});
    \draw[red, ->, >=Triangle, line width = 1.5, domain = 110:100] plot({(1-0.25*sin(2*\x))*cos(\x)}, {(1-0.25*sin(2*\x))*sin(\x)});
    \draw[violet, ->, >=Triangle, line width = 1.5, domain = 290:280] plot({(1-0.25*sin(2*\x))*cos(\x)}, {(1-0.25*sin(2*\x))*sin(\x)});
  \end{scope}
  \draw[line width=1](0.2,0)--(0.4,0.3);
  \draw[line width=1](0.2,0)--(1.45,0)--(1.65,0.3);
  \draw[line width=1](0.4,0.3)--(0.6,0.3);
  \draw[line width=1](0.9,0.3)--(1.05,0.3);
  \draw[line width=1](1.35,0.3)--(1.65,0.3);
  \draw[->, >=Triangle, line width=1, dashed, domain=5:60] plot({1.55*cos(\x)}, {1.55*sin(\x)});
  \path ({1.55*cos(60)}, {1.55*sin(60)}) node[above=4,right=4]{$\theta$};
\end{tikzpicture}
  \end{minipage}
  \begin{minipage}[t]{0.245\linewidth}
    \centering (g) \\[5pt]
    \begin{tikzpicture}
  \begin{scope}[rotate=90]
  \draw[samples = 100, red,    line width = 1.5, domain = -205:-90]  plot({1.00*(1+0.55*sin(2*\x/3))*cos(\x)},     {1.00*(1+0.55*sin(2*\x/3))*sin(\x)});
  \draw[samples = 100, red,    line width = 1.5, domain = -80:35]  plot({1.00*(1+0.55*sin(2*\x/3))*cos(\x)},     {1.00*(1+0.55*sin(2*\x/3))*sin(\x)});
  \draw[samples = 100, violet, line width = 1.5, domain = 55:265]   plot({1.00*(1+0.55*sin(2*\x/3))*cos(\x)},     {1.00*(1+0.55*sin(2*\x/3))*sin(\x)});
  \draw[samples = 100, violet, line width = 1.5, domain = 275:305]   plot({1.00*(1+0.55*sin(2*\x/3))*cos(\x)},     {1.00*(1+0.55*sin(2*\x/3))*sin(\x)});
  \draw[samples = 100, green,  line width = 1.5, domain = 325:575]  plot({1.00*(1+0.55*sin(2*\x/3))*cos(\x)},     {1.00*(1+0.55*sin(2*\x/3))*sin(\x)});
  \draw[samples = 100, blue,   line width = 1.5, domain = -125:-95] plot({1.00*(1+0.55*sin(2*\x/3+120))*cos(\x)}, {1.00*(1+0.55*sin(2*\x/3+120))*sin(\x)});
  \draw[samples = 100, blue,   line width = 1.5, domain = -85:125] plot({1.00*(1+0.55*sin(2*\x/3+120))*cos(\x)}, {1.00*(1+0.55*sin(2*\x/3+120))*sin(\x)});

  \draw[samples = 100, red,    ->, >=Triangle, line width = 1.5, domain = 5:15]  plot({1.00*(1+0.55*sin(2*\x/3))*cos(\x)},     {1.00*(1+0.55*sin(2*\x/3))*sin(\x)});
  \draw[samples = 100, violet, ->, >=Triangle, line width = 1.5, domain = 125:135]   plot({1.00*(1+0.55*sin(2*\x/3))*cos(\x)},     {1.00*(1+0.55*sin(2*\x/3))*sin(\x)});
  \draw[samples = 100, green,  ->, >=Triangle, line width = 1.5, domain = 545:555]  plot({1.00*(1+0.55*sin(2*\x/3))*cos(\x)},     {1.00*(1+0.55*sin(2*\x/3))*sin(\x)});
  \draw[samples = 100, blue,   ->, >=Triangle, line width = 1.5, domain = -55:-45] plot({1.00*(1+0.55*sin(2*\x/3+120))*cos(\x)}, {1.00*(1+0.55*sin(2*\x/3+120))*sin(\x)});
  \end{scope}

  \draw[line width=1](0.2,0)--(0.4,0.3);
  \draw[line width=1](0.2,0)--(1.65,0)--(1.85,0.3);
  \draw[line width=1](0.4,0.3)--(0.45,0.3);
  \draw[line width=1](0.65,0.3)--(0.7,0.3);
  \draw[line width=1](0.90,0.3)--(1.35,0.3);
  \draw[line width=1](1.55,0.3)--(1.85,0.3);
  \draw[->, >=Triangle, line width=1, dashed, domain=5:60] plot({1.75*cos(\x)}, {1.75*sin(\x)});
  \path ({1.75*cos(60)}, {1.75*sin(60)}) node[above=4,right=4]{$\theta$};  
\end{tikzpicture}
  \end{minipage} \\[5pt]
  \begin{minipage}[t]{0.245\linewidth}
    \centering (b) \\[5pt]
      \begin{tikzpicture}
    \draw[green, line width = 1.5, domain = 0:50] plot({0.5*cos(1.5*\x)}, {3*\x/360});
    \draw[red, line width = 1.5, domain = 70:290] plot({0.5*cos(1.5*\x)}, {3*\x/360});
    \draw[blue, line width = 1.5, domain = 310:360] plot({0.5*cos(1.5*\x)}, {3*\x/360});
    \draw[blue, line width = 1.5, domain = 0:170] plot({-0.5*cos(1.5*\x)}, {3*\x/360});
    \draw[green, line width = 1.5, domain = 190:360] plot({-0.5*cos(1.5*\x)}, {3*\x/360});

    \draw[->, >=Triangle, blue, line width = 1.5, domain = 70:90] plot({-0.5*cos(1.5*\x)}, {3*\x/360});
    \draw[->, >=Triangle, red, line width = 1.5, domain = 190:210] plot({0.5*cos(1.5*\x)}, {3*\x/360});
    \draw[->, >=Triangle, green, line width = 1.5, domain = 310:330] plot({-0.5*cos(1.5*\x)}, {3*\x/360});
    \draw[->, >=Triangle, dashed, line width=1] (1.0,0.1)--(1.0,2.9) node[pos=1,right]{$\theta$};
  \end{tikzpicture}
  \end{minipage}
  \begin{minipage}[t]{0.245\linewidth}
    \centering (d) \\[5pt]
    \begin{tikzpicture}
  \draw[violet, line width = 1.5, domain = 0:35] plot({0.5*cos(2.0*\x)}, {3*\x/360});
  \draw[red, line width = 1.5, domain = 55:215] plot({0.5*cos(2.0*\x)}, {3*\x/360});
  \draw[violet, line width = 1.5, domain = 235:360] plot({0.5*cos(2.0*\x)}, {3*\x/360});
  \draw[blue, line width = 1.5, domain = 0:125] plot({-0.5*cos(2.0*\x)}, {3*\x/360});
  \draw[green, line width = 1.5, domain = 145:305] plot({-0.5*cos(2.0*\x)}, {3*\x/360});
  \draw[blue, line width = 1.5, domain = 325:360] plot({-0.5*cos(2.0*\x)}, {3*\x/360});

  \draw[->, >=Triangle, blue, line width = 1.5, domain = 65:75] plot({-0.5*cos(2.0*\x)}, {3*\x/360});
  \draw[->, >=Triangle, red, line width = 1.5, domain = 155:165] plot({0.5*cos(2.0*\x)}, {3*\x/360});
  \draw[->, >=Triangle, green, line width = 1.5, domain = 245:255] plot({-0.5*cos(2.0*\x)}, {3*\x/360});
  \draw[->, >=Triangle, violet, line width = 1.5, domain = 335:345] plot({0.5*cos(2.0*\x)}, {3*\x/360});
  \draw[->, >=Triangle, dashed, line width=1] (1.0,0.1)--(1.0,2.9) node[pos=1,right]{$\theta$};
\end{tikzpicture}
  \end{minipage}
  \begin{minipage}[t]{0.245\linewidth}
    \centering (f) \\[5pt]
    \begin{tikzpicture}
  \draw[violet, line width = 1.5, domain = 0:35] plot({0.5*cos(2.0*\x)}, {3*\x/360});
  \draw[red, line width = 1.5, domain = 55:215] plot({0.5*cos(2.0*\x)}, {3*\x/360});
  \draw[violet, line width = 1.5, domain = 235:360] plot({0.5*cos(2.0*\x)}, {3*\x/360});
  \draw[blue, line width = 1.5, domain = 0:125] plot({-0.5*cos(2.0*\x)}, {3*\x/360});
  \draw[green, line width = 1.5, domain = 145:305] plot({-0.5*cos(2.0*\x)}, {3*\x/360});
  \draw[blue, line width = 1.5, domain = 325:360] plot({-0.5*cos(2.0*\x)}, {3*\x/360});

  \draw[->, >=Triangle, blue, line width = 1.5, domain = 65:75] plot({-0.5*cos(2.0*\x)}, {3*\x/360});
  \draw[->, >=Triangle, red, line width = 1.5, domain = 155:145] plot({0.5*cos(2.0*\x)}, {3*\x/360});
  \draw[->, >=Triangle, green, line width = 1.5, domain = 245:255] plot({-0.5*cos(2.0*\x)}, {3*\x/360});
  \draw[->, >=Triangle, violet, line width = 1.5, domain = 335:325] plot({0.5*cos(2.0*\x)}, {3*\x/360});
  \draw[->, >=Triangle, dashed, line width=1] (1.0,0.1)--(1.0,2.9) node[pos=1,right]{$\theta$};
\end{tikzpicture}
  \end{minipage}
  \begin{minipage}[t]{0.245\linewidth}
    \centering (h) \\[5pt]
    \begin{tikzpicture}
  \draw[violet, line width = 1.5, domain = 0:35] plot({0.375+0.375*cos(2.0*\x)}, {3*\x/360});
  \draw[green, line width = 1.5, domain = 55:90] plot({0.375+0.375*cos(2.0*\x)}, {3*\x/360});
  \draw[red, line width = 1.5, domain = 0:90] plot({0.375-0.375*cos(2.0*\x)}, {3*\x/360});
  \draw[blue, line width = 1.5, domain = 0:90] plot({1.5}, {3*\x/360});

  \draw[green, line width = 1.5, domain = 90:180] plot({0}, {3*\x/360});
  \draw[red, line width = 1.5, domain = 90:125] plot({1.125+0.375*cos(2.0*\x)}, {3*\x/360});
  \draw[violet, line width = 1.5, domain = 145:180] plot({1.125+0.375*cos(2.0*\x)}, {3*\x/360});
  \draw[blue, line width = 1.5, domain = 90:180] plot({1.125-0.375*cos(2.0*\x)}, {3*\x/360});

  \draw[blue, line width = 1.5, domain = 180:215] plot({0.375+0.375*cos(2.0*\x)}, {3*\x/360});
  \draw[red, line width = 1.5, domain = 235:270] plot({0.375+0.375*cos(2.0*\x)}, {3*\x/360});
  \draw[green, line width = 1.5, domain = 180:270] plot({0.375-0.375*cos(2.0*\x)}, {3*\x/360});
  \draw[violet, line width = 1.5, domain = 180:270] plot({1.5}, {3*\x/360});

  \draw[red, line width = 1.5, domain = 270:360] plot({0}, {3*\x/360});
  \draw[green, line width = 1.5, domain = 270:305] plot({1.125+0.375*cos(2.0*\x)}, {3*\x/360});
  \draw[blue, line width = 1.5, domain = 325:360] plot({1.125+0.375*cos(2.0*\x)}, {3*\x/360});
  \draw[violet, line width = 1.5, domain = 270:360] plot({1.125-0.375*cos(2.0*\x)}, {3*\x/360});

  \draw[blue, ->, >=Triangle, line width = 1.5, domain = 35:55] plot({1.5}, {3*\x/360});
  \draw[green, ->, >=Triangle, line width = 1.5, domain = 125:145] plot({0}, {3*\x/360});
  \draw[violet, ->, >=Triangle, line width = 1.5, domain = 215:235] plot({1.5}, {3*\x/360});
  \draw[red, ->, >=Triangle, line width = 1.5, domain = 305:325] plot({0}, {3*\x/360});
  \draw[->, >=Triangle, dashed, line width=1] (2.0,0.1)--(2.0,2.9) node[pos=1,right]{$\theta$};
\end{tikzpicture}
  \end{minipage}
  \caption{
    \label{fig:trefoil}
    Examples of (a) a trefoil vortex knot,
    (c) a coherently oriented vortex link,
    (e) an incoherently oriented vortex link, and
    (g) a figure-eight vortex knot.
    The knots and links in panels (a), (c), (e), and (g) are constructed by connecting the upper and lower parts of the vortex braids shown in panels (b), (d), (f), and (h), respectively.
    The orientations of the hydrodynamic circulations for each vortex are indicated by arrows.
    Different arcs are represented in different colors.
  }
\end{figure*}
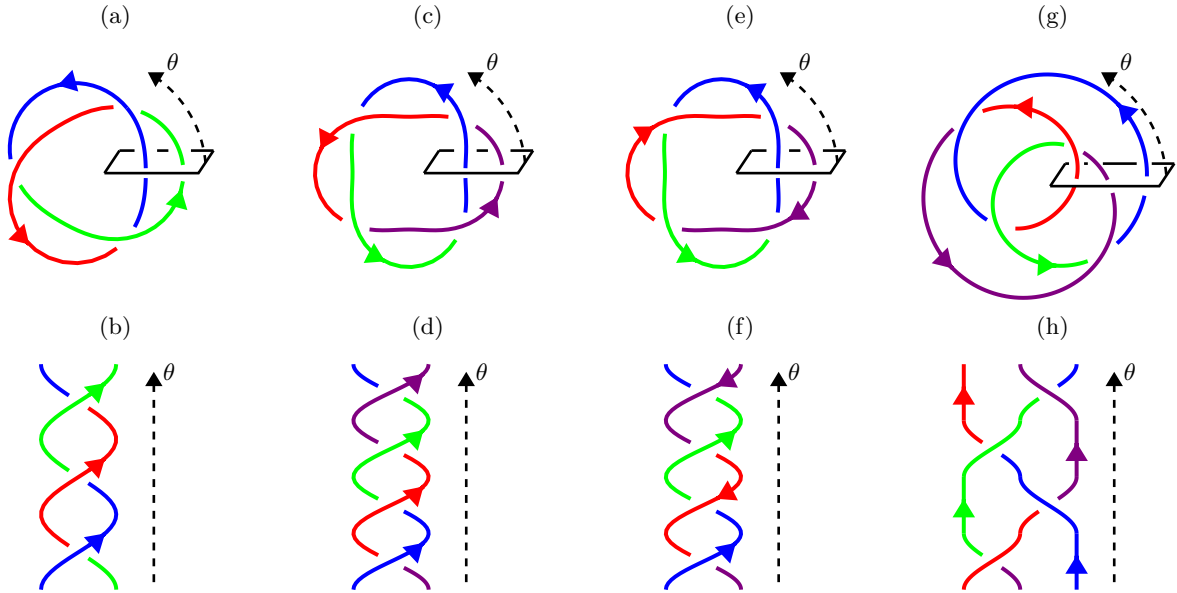
We examine vortex knots and links using a mathematical framework recently developed in \cite{NKTK24}. According to Alexander’s theorem, any knot or link can be constructed as the closure of a braid (see Appendix \ref{sec:math} in Methods). A simple example of a trefoil vortex knot is shown in Fig.~\ref{fig:trefoil}(a), which can be constructed from the circular closure of a vortex braid, as depicted in Fig.~\ref{fig:trefoil}(b). The dynamics of this vortex knot are inferred from the vortex braid’s behavior; a trefoil knot containing only one positive or negative braid will rotate in a circular motion, with the direction $\theta$ determined by the braiding in Fig.~\ref{fig:braid_xyz}. Another example, a $(2,4)$-torus link, is illustrated in Figs.~\ref{fig:trefoil}(c) and \ref{fig:trefoil}(d). Both the trefoil and the torus link are divided into several arcs, disconnected by crossing points.

The stability of vortex knots and links can be classified into two categories: (A) dynamical stability and (B) topological stability.

(A) Dynamical stability:
The dynamical stability of a vortex knot or link, when represented as a circular vortex braid, depends on (i) the orientation of the hydrodynamic circulation and (ii) the positivity of the braid. This is because the dynamics of the braids in Fig.~\ref{fig:braid_xyz} are consistent with the dynamics of point vortices in Fig.~\ref{fig:braid_xyt}. For example, a vortex knot or link will be unstable if its orientations are not aligned. Two crossing points with opposite braidings will move in opposing directions, eventually colliding. In such cases, the vortex knot or link will unravel. Fig.~\ref{fig:trefoil}(e) shows an incoherently oriented $(2,4)$-torus link, with the corresponding braid shown in Fig.~\ref{fig:trefoil}(f) containing vortices with opposite hydrodynamic circulations.

The orientation of vortex knots and links is determined by the hydrodynamic circulation of the vortex lines forming the braid. If the circulations of the vortex lines are consistent, the vortices can smoothly follow the braid and rotate around the knot or link without destabilizing. However, if the vortex lines have zero or opposite hydrodynamic circulations, the vortex knot or link is expected to be unstable. Thus, dynamical stability is characterized by (coherently) {\bf oriented positive braid links (OPB)}, which are closures of oriented positive braids.

As a reference, closed relativistic cosmic strings shrink over time \cite{Davis_1989} (see Supplementary Movies 14 \& 15), and knots or links of relativistic cosmic strings also shrink (see Supplementary Movies 16 \& 17).% Since liquid crystals have kinetic terms similar to relativistic theories, they also do not exhibit stable rings, knots, or links, even with topological stability (discussed below).

(B) Topological stability:
Vortex knots and links are composed of several arcs separated by crossing points. In Fig.~\ref{fig:trefoil}(a) and (c), the trefoil knot consists of three arcs, while the $(2,4)$-torus link consists of four. Each arc is classified by a topological charge corresponding to elements of the fundamental group $\pi_1(G/H)$, where $G$ and $H$ are Lie groups determined by the superfluid phase transition and $G \to H$ symmetry breaking.

If all the vortex arcs share the same topological charge, the vortex knot or link is Abelian and tends to decay into several loops through reconnections (see Supplementary Movies 3, 6, 8, 10, and 12). This behavior has been studied in scalar BECs \cite{Kleckner_2016,PhysRevE.85.036306}. However, in this study, we investigate {\it non-Abelian} vortex knots and links. If the topological charges of the arcs in a knot or link are {\bf mutually non-identical (MNI)}, then the structure is more stable. If the topological charges are also {\bf non-commutative (MNC)}, then the structure is even more robust. For an MNC vortex knot, the fundamental group $\pi_1(G/H)$ must be non-Abelian \cite{Poenaru1977,Mermin:1979zz}, and to further avoid reconnections, the fundamental group should not be {\bf nilpotent} \cite{Brekke_1992} (see Appendix~\ref{sec:nilpotent} of Methods).

In an MNC vortex knot, when two arcs collide, they do not pass through each other but instead create a vortex bridge, preventing reconnection \cite{Poenaru1977,Mermin:1979zz,Kobayashi:2008pk}. This bridge formation requires energy, making MNC knots stable. In contrast, for MNI knots, where the elements are commutative, arcs may pass through each other during collisions. While this does not necessarily cause instability, it makes MNI knots less stable than MNC knots, resulting in metastable structures.

In summary, stable vortex knots and links must meet both (A) dynamical and (B) topological stability conditions. These stable structures are characterized as {\bf mutually non-commutative oriented positive braid links (MNCOPB)} or their mirror images. Metastable knots and links may include {\bf mutually non-identical oriented positive braid links (MNIOPB)} and OPB, where arcs of commutative elements are spaced far apart to prevent reconnections. In the following sections, we introduce a unique physical system that can support these stable and metastable vortex knots and links.

%%%%%%%%%%%%%%%%%%%%
\section*{Stable knots and links in Bose-Einstein condensates}

We consider physical hydrodynamic systems capable of producing stable quantum vortex knots and links. The simplest such systems, which admit quantum vortices, are superfluid helium and scalar atomic BECs, where superfluidity results from the breaking of the $U(1)$ symmetry as 
$U(1) \to 1$. In these systems, vortices are characterized by topological charges classified by the fundamental group $\pi_1(U(1)/1) \cong \mathbb{Z}$. While vortex knots and links have been extensively studied in these systems, they are inherently unstable and tend to break into separate vortex loops \cite{PhysRevE.85.036306,Kleckner_2016} because the fundamental group $\mathbb{Z}$ is Abelian. In these cases, all vortex arcs of a knot or link have identical unit components and crossing points lead to reconnections that destabilize the knot or link.

The simplest experimentally accessible systems that host non-Abelian vortices, which are classified by non-Abelian fundamental groups 
$\pi_1(G/H)$, are spin-2 spinor BECs. These systems exhibit two distinct phases, defined by spin-dependent coupling constants, which give rise to non-Abelian vortices \cite{Semenoff:2006vv,Kobayashi:2008pk,Borgh:2016cco,Mawson:2018klj}, see Ref.~\cite{Kawaguchi:2012ii} as a review. In the {\bf cyclic phase}, the relevant fundamental group is $\pi_1[(U(1) \times \SO(3)) / T]\cong \mathbb{Z} \times_h T^\ast$, and in the {\bf $D_4$-nematic phase}, it is $\pi_1[(U(1) \times \SO(3)) / D_4] \cong \mathbb{Z} \times_h D_4^\ast$, where $T^\ast$ and $D_4^\ast$ are the universal covering groups of the tetrahedral group $T$ and the fourth dihedral group $D_4$, respectively. The $h$-product $\times_h$ is defined in \cite{Kobayashi:2011xb}.

Vortices in these phases are characterized by topological charges given by pairs of elements $(k, g)$, where $k$ belongs to the $U(1)$ subgroup and $g$ belongs to the $SO(3)$ subgroup. Importantly, vortices are not classified by the fundamental group itself but rather by its conjugacy classes \cite{Mermin:1979zz}. There are seven conjugacy classes in both $\mathbb{Z} \times_h T^\ast$ and $\mathbb{Z} \times_h D_4^\ast$ \cite{Semenoff:2006vv,Mawson:2018klj,Masaki_2024}, which determine the hydrodynamic properties, such as stability and circulation (see Methods Sec.~\ref{sec:spin-2BEC}).

Vortex arcs are categorized into hydrodynamic and non-hydrodynamic classes based on their $U(1)$ phase winding. Vortices classified as hydrodynamic have nonzero (fractional) $U(1)$ phase windings, while non-hydrodynamic vortices have zero winding. As with classical hydrodynamic vortex rings described by Helmholtz ~\ref{sec:spin-2BEC}, hydrodynamic vortex rings with minimal $U(1)$ phase winding are dynamically stable and move with self-induced velocities (see Supplementary Movie 2). In contrast, non-hydrodynamic vortices are dynamically unstable and tend to shrink, similar to closed relativistic cosmic strings \cite{Davis_1989} (see Supplementary Movie 1). Therefore, we focus on hydrodynamic vortices for constructing stable vortex knots and links.

Another candidate for generating non-Abelian vortices is the $D_2$-nematic phase in spin-2 spinor BECs, with the fundamental group $\pi_1[U(1) \times \SO(3) / D_2 ] \cong \mathbb{Z} \times Q_8$, where $Q_8$ is the quaternion group. Although topologically stable vortex knots have been discussed in this phase \cite{Annala_2022}, all vortices in this phase are non-hydrodynamic and quickly decay into vortex rings (see Supplementary Movie 5). Biaxial nematic liquid crystals exhibit similar behavior.

The movement of hydrodynamic vortex rings is determined by the orientation of the vortex knots and links. Like simple hydrodynamic vortex rings, coherently oriented vortex knots and links, such as those shown in Figs.~\ref{fig:trefoil}(a), \ref{fig:trefoil}(c), and \ref{fig:trefoil}(g), propagate in a single direction. In contrast, incoherently oriented vortex knots and links, such as those shown in Fig. \ref{fig:trefoil}(e), exhibit no fixed movement direction. In such cases, the vortex loops move in opposite directions, ultimately becoming unstable (see Supplementary Movie 7).

In hydrodynamic vortex knots and links, the crossing points also move along the loops, with their movement direction determined by the sign of the crossing. For OPB knots and links (see Figs.~\ref{fig:trefoil}(a) and \ref{fig:trefoil}(c)), all crossing points move in the same direction, contributing to their stability. However, in the case of the figure-eight knot (Fig. \ref{fig:trefoil}(g)), the outer and inner crossing points move in opposite directions, eventually leading to a collision and instability.

We have classified all possible MNCOPB and MNIOPB in the cyclic and nematic phases of spin-2 BECs (see Methods Sec.~\ref{sec:classification}). Among all possibilities, we identify a unique MNCOPB-- a stable vortex trefoil knot in the cyclic phase of spin-2 BEC (Fig.~\ref{fig:main-results}(a) and Supplementary Movie 4). This is the first stable vortex knot observed since Lord Kelvin's original proposal over 150 years ago, representing the main result of this work.

In addition to the MNCOPB, we have found two classes of metastable knots and links. The first class consists of structures that maintain their shape, with no collisions or rung formations. The second class includes knots and links that do not retain their shape, as rung formation occurs during collisions. In these cases, stability depends on the non-commutativity of the created rungs.

Examples of first class
are given by repeating the braids for MNCOPBs and taking their closures, namely powers of the braid 
corresponding to the MNCOPB:
the $(2,6)$-torus link (Fig.~\ref{fig:main-results}(b) and Supplementary Movie 9) and the $(2,9)$-torus knot (Fig.~\ref{fig:main-results}(c) and Supplementary Movie 11) in the cyclic phase. 
These can be generalized to $(2,3n)$-torus links or knots, where $n \in \mathbb{Z}$ and $n \neq 0$. These structures remain stable even with multiple arcs corresponding to the same topological charge. This contrasts with the $(3,3)$-torus link in the cyclic phase, which consists of three vortex loops with identical topological charges and breaks into three vortex rings (see Supplementary Movie 13).

In the second class, examples include two types of $(2,4)$-torus links in the nematic phase. Both are MNIOPBs consisting of two linked vortices, with one set belonging to the same conjugacy class (Supplementary Movie 18) and the other to different classes (Fig.~\ref{fig:main-results}(d) and Supplementary Movie 19). In both cases, the vortices collide, forming four rungs. In the first case, the four rungs have identical topological charges, making the structure unstable, and eventually splitting into two vortex rings. In the second case, the rungs consist of well-separated pairs with the same topological charge, making the structure metastable.

\bigskip
Our findings demonstrate how to tie knots and links of non-Abelian quantum vortices in systems with internal degrees of freedom, avoiding reconnections through non-commutativity. These results have potential applications in neutron stars, where the nematic phase may exist in neutron $P$-wave superfluids \cite{Chamel2017,Haskell:2017lkl,Sedrakian:2018ydt}. The existence of stable vortex knots or links in such environments could have significant implications for neutron star dynamics and evolution. 
Moreover, our mathematical framework for strongly protected knotted and linked structures may have broader applications in systems that prohibit reconnections. These include quantum computation, where the braiding of non-Abelian quantum vortices corresponds to computations \cite{Mawson:2018klj}, non-Abelian extensions of quantum turbulence (which may exhibit scaling laws distinct from the conventional Kolmogorov law for fluids and superfluids), and biological systems such as DNA, RNA, and proteins, where reconnections are prohibited \cite{Wasserman}.

%%%%%%%%%%%%%%%%%%%%%%%%%
\section*{Acknowledgments}
%The authors would like to thank .... for useful comments. 
This work is supported by 
the World Premier International Research Center Initiative (WPI) program ``Sustainability with Knotted Chiral Meta Matter (SKCM$^2$)'' 
 in the Ministry of
Education, Culture, Sports, Science (MEXT), Japan, 
%at Hiroshima University
and 
by 

Japan Society for the Promotion of Science 
(JSPS) Grant-in-Aid for Scientific Research KAKENHI Grant No.~JP19KK0066 (M.~K.), JP20K03765 (M.~K.), JP24K00593 (M.~K.), JP20K14317 (Y.~N.), JP23K12974 (Y.~N.), JP20K03588 (Y.~K.), JP23K20791 (Y.~K.), JP23H05437 (Y.~K.), JP24K06744 (Y.~K.), JP22H01221 
(M.~K. and M.~N.). 
%Y. K. is also supported  JST
%CREST Grant Number JPMJCR17J4.

%%%%%%%%%%%%%%%%%%%%%%%%%%%%%%%%%

%%%%%%%%%%%%%%%%%%
% \bibliographystyle{ieeetr}
\bibliography{reference}

%%%%%%%%%%%%%%%%%%%%%%%%%%%%%%%%
\newpage
\begin{center}
    {\bf \LARGE Methods}
\end{center}

\section{Vortices in Spinor BECs}\label{sec:spin-2BEC}
Here, we summarize the properties of spin-2 spinor BECs~\cite{Kawaguchi:2012ii}. The order parameter of spin-2 BECs consists of five complex components, represented by a traceless symmetric $3 \times 3$ tensor $A$, with complex components transforming under the symmetry group $G = U(1) \times \SO(3)$ as:
\begin{align}
    A \mapsto e^{i\theta} g A g^T, \quad e^{i\theta} \in U(1), \quad g \in \SO(3). \label{eq:U1SO3}
\end{align}
The effective low-energy Hamiltonian density $h$ of spin-2 spinor BECs with particle mass $m$ can be expressed as:
\begin{align} \begin{aligned}
    h &= \frac{\hbar^2}{6m} \sum_{i,j=1,2,3} (\nabla A_{ij}^\ast) \cdot (\nabla A_{ij}) \\ &+ \frac{1}{2} \left(c_0 \rho^2 + c_1 \boldsymbol{S}^2 + c_2 |\Psi_{20}|^2\right),
\end{aligned} \end{align}
where $\rho$, $\boldsymbol{S}^2$, and $\Psi_{20}$ are the density, the spin density, and the singlet-pair amplitude, respectively, defined as:
\begin{align} \begin{aligned}
    & \rho = \frac{1}{3} \left( \operatorname{tr} A^\ast A \right), \quad \Psi_{20} = \frac{1}{3} \left( \operatorname{tr} A^2 \right), \\ & \boldsymbol{S}^2 = \frac{4}{9} \left(\operatorname{tr} A^\ast A\right)^2 + \frac{2}{9} \left|\operatorname{tr}A^2\right|^2 - \frac{4}{3}\left(\operatorname{tr}A^{\ast 2} A^2 \right).
\end{aligned} \end{align}
The Hamiltonian density $h$ is invariant under the transformation defined in Eq.\eqref{eq:U1SO3}. The phases of condensates with the total angular momentum of two are classified as nematic for $c_2 < 0$ and $c_2 < 4 c_1$, cyclic for $c_2 > 0$, and ferromagnetic for $c_1 < 0$ and $c_2 > 4 c_1$ \cite{Mermin:1974zz}. All these phases are theoretically possible in spin-2 BECs. Experimentally, spin-2 BECs have been realized using $^{87}$Rb atoms, where the phase lies near the boundary between the cyclic and ferromagnetic phases (see, e.g., Ref.\cite{PhysRevA.80.042704}).

In this paper, we focus on the nematic and cyclic phases, which host non-Abelian vortices.

The dynamics of vortices in spinor BECs, in terms of the traceless symmetric tensor $A$, is described by the Gross-Pitaevskii equation:
    \begin{align} i \frac{\partial A_{ij}}{\partial t} = \frac{\delta h}{\delta A_{ij}^\ast}.
\end{align}
The Gross-Pitaevskii equation is non-relativistic, but it can be compared to the relativistic equation used in cosmology:
\begin{align}
    \frac{\partial^2 A_{ij}}{\partial t^2} = - \frac{\delta h}{\delta A_{ij}^\ast},
\end{align}
which admits relativistic cosmic strings.

%%%%%%%%%%%

\subsection{Cyclic phase in spin-2 BEC}

The order parameter of the cyclic phase is:
\begin{align}
    A \sim {\rm diag} (1,e^{2\pi i/3},e^{4\pi i/3}), \label{eq:cyclic}
\end{align}
which leads to symmetry breaking down to the tetrahedral group $H \simeq T$, and the order parameter manifold is given by:
\begin{align}
    \frac{G}{H} = \frac{U(1) \times \SO(3)}{T} \simeq \frac{U(1) \times SU(2)}{T^\ast},
\end{align}
with $T^\ast$, and a non-Abelian fundamental group:
\begin{align}
    \pi_1 \left( \frac{U(1) \times SU(2)}{T^{\ast}}\right) \cong \mathbb{Z} \times_h T^{\ast},
    \label{conj-1}
\end{align}
where $\times_h$ is defined in Ref.~\cite{Kobayashi:2011xb}. This group consists of the following twenty-four types of elements:
\begin{align}
    \left\{\begin{array}{l}
        \displaystyle
        (N, \pm \bm{1}_2), \\
        \displaystyle
        (N, \pm i \sigma_x), (N,\pm i  \sigma_y), 
        (N, \pm i \sigma_z), \\
        \displaystyle
        \left(N + \frac{1}{3}, \pm C_3\right), 
        \left(N + \frac{1}{3}, \pm i \sigma_x C_3 \right), \\
        \displaystyle
        \left(N + \frac{1}{3}, \pm i \sigma_y C_3 \right),
        \left(N + \frac{1}{3}, \pm i \sigma_z C_3 \right), \\
        \displaystyle
        \left(N - \frac{1}{ 3}, \pm C_3^2 \right),
        \left(N - \frac{1}{ 3}, \pm i \sigma_x C_3^2 \right), \\ 
        \displaystyle
        \left(N - \frac{1}{ 3}, \pm i \sigma_y C_3^2\right), 
        \left(N - \frac{1}{ 3}, \pm i \sigma_z C_3^2\right)
    \end{array}\right\},
    \label{cyclic-elements}
\end{align}
where the first and second elements of a pair $(\cdot,\cdot)$ denote the circulation $\kappa$ (i.e., the $U(1)$ winding number) and an $SU(2)$ element, respectively, with $N \in \mathbb{Z}$. The matrix $C_3 \equiv (1/2) (\bm{1}_2 + i \sigma_x + i \sigma_y + i \sigma_z)$ satisfies $(C_3)^3 = - \bm{1}_2$.

\begin{table*}[htb]
    \centering
    \begin{tabular}{c|ccc}
        Phase & & Conjugacy class & $N$ for stable vortex \\ \hline
        \multirow{18}{*}{Cyclic} & (I) & $\{(N, {\bm 1}_2)\}$ & unstable \\
        & (II) & $\{(N, - {\bm 1}_2)\}$ & unstable  \\
        & (III) & $\{(N, \pm i \sigma_x), (N, \pm i \sigma_y), (N, \pm i \sigma_z)\}$ & $N=0$ (non-hydrodynamic) \\
        & (IV) & $\left\{\begin{array}{l}
        \displaystyle \left(N+\frac{1}{3}, C_3\right), \left(N+\frac{1}{3}, -i \sigma_x C_3\right) \\
        \displaystyle \left(N+\frac{1}{3}, -i \sigma_y C_3 \right), \left(N+\frac{1}{3}, -i \sigma_z C_3 \right)
        \end{array}\right\}$ & $N = 0$ (hydrodynamic) \\
        & (V) & $\left\{\begin{array}{l}
        \displaystyle \left(N+\frac{1}{3}, (C_3)^{-2} \right), \left(N+\frac{1}{3}, (i \sigma_x C_3)^{-2} \right) \\
        \displaystyle \left(N+\frac{1}{3}, (i \sigma_y C_3)^{-2} \right), \left(N+\frac{1}{3}, (i \sigma_z C_3)^{-2}\right)
        \end{array}\right\}$ & unstable  \\
        & (VI) & $\left\{\begin{array}{l}
        \displaystyle \left(N-\frac{1}{3}, (C_3)^2\right), \left(N-\frac{1}{3}, (i \sigma_x C_3)^2 \right) \\
        \displaystyle \left(N-\frac{1}{3}, (i \sigma_y C_3)^2 \right), \left(N-\frac{1}{3}, (i \sigma_z C_3)^2 \right)
        \end{array}\right\}$ & unstable  \\
        & (VII) & $\left\{\begin{array}{l}
        \displaystyle \left(N-\frac{1}{3}, (C_3)^{-1} \right), \left(N-\frac{1}{3}, (-i \sigma_x C_3)^{-1}\right) \\
        \displaystyle \left(N-\frac{1}{3}, (-i \sigma_y C_3)^{-1}\right), \left(N-\frac{1}{3}, (-i \sigma_z C_3)^{-1}\right)
        \end{array}\right\}$ & $N = 0$ (hydrodynamic) \\ \hline
        \multirow{12}{*}{$D_4$-BN} & (I) & $\{(N, {\bm 1}_2)\}$ & unstable \\
        & (II) & $\{(N, - {\bm 1}_2)\}$ & unstable \\
        & (III) & $\{(N, \pm i \sigma_x), (N, \pm i \sigma_y)\}$ & $N=0$ (non-hydrodynamic) \\
        & (IV) & $\{(N, \pm i \sigma_z)\}$ & unstable \\
        & (V) & $\displaystyle \left\{
        \left(N+\frac{1}{2}, C_4\right), \left(N+\frac{1}{2}, (C_4)^{-1} \right)
        \right\}$ & $N = 0$, $-1$ (hydrodynamic) \\
        & (VI) & $\displaystyle \left\{\
        \left(N+\frac{1}{2}, (C_4)^3\right), \left(N+\frac{1}{2}, (C_4)^{-3} \right)
        \right\}$ & unstable \\
        & (VII) & $\left\{\begin{array}{l}
        \displaystyle \left(N+\frac{1}{2}, i \sigma_x C_4\right), \left(N+\frac{1}{2}, (i \sigma_x C_4)^{-1} \right) \\
        \displaystyle \left(N+\frac{1}{2}, i \sigma_x (C_4)^{-1} \right), \left(N+\frac{1}{2}, ( i \sigma_x (C_4)^{-1})^{-1} \right)
        \end{array}\right\}$ & $N=0$, $-1$ (hydrodynamic) \\
    \end{tabular}
    \caption{
        \label{table-spin2}
        Conjugacy classes of fundamental groups for the order-parameter manifold in cyclic (upper) and $D_4$-BN (lower) phases. All components of the fundamental groups are given in Eqs.~\eqref{cyclic-elements} and \eqref{conj-5} for the cyclic and $D_4$-BN phases, respectively. In the table, we use the relations $(C_3)^{-1} = - (C_3)^2$ and $(C_4)^{-1} = -(C_4)^3$.
    }
\end{table*}

The conjugacy classes of Eq.~(\ref{cyclic-elements}) are composed of the seven elements for each $N$ \cite{Semenoff:2006vv}, as summarized in Table \ref{table-spin2}. These elements describe:
(I) integer vortices ($N=0$ corresponds to the vacuum),
(II) spin vortices with $2\pi$ rotations,
(III) spin vortices with $\pi$ rotations around the $x$, $y$, or $z$ axes, and
(IV)--(VII) non-Abelian $\pm 1/3$ ($2/3$)-quantum vortices.
The stability of $\pm 1/3$ ($2/3$)-quantum vortices can be summarized as follows. Vortices (IV) with $N=0$ ($\kappa=1/3$) and (VII) with $N=0$ ($\kappa=-1/3$) are stable and can be considered elementary. These two types are anti-vortices to each other: (IV) = (VII)$^{-1}$. Vortices (IV) with $N=-1$ ($\kappa=-2/3$) and (VII) with $N=1$ ($\kappa=2/3$) are marginally stable
%\footnote{They remain stable unless a vortex with non-commutative topological charge approaches them. In such cases, they may split into elementary vortices.
%\red{(NO NEED since the same kind of sentence outside the footnote?)}}
, but they can split into two elementary vortices when approached by a vortex carrying a non-commutative topological charge. The remaining vortices are unstable and decay into elementary vortices ($\kappa = \pm 1/3$) or a combination of elementary vortices and marginally stable vortices.

In simulations, we observe that marginally stable vortices ($\kappa = \pm 2/3$) are prone to decay into a pair of elementary vortices when near a vortex with non-commutative topological charge (see Supplementary Movie 21). However, such decay does not occur when near a vortex with a commutative topological charge (see Supplementary Movie 22). Thus, marginally stable vortices ($\kappa = \pm 2/3$) are unfavorable for forming stable MNCOPB or MNIOPB, and we do not consider them further in this paper.

%%%%%%%%%%%%%%%%%%%%%%%%%
\subsection{$D_4$-BN phases in spin-2 BEC}

The nematic phase consists of three degenerate phases: the UN, $D_2$-, and $D_4$-BN phases. Quantum or thermal effects can cause the system to select one of these phases \cite{Uchino:2010pf, Kobayashi:2011xb}. Here, we focus on the $D_4$-BN phase.\footnote{This $D_4$-BN phase is also found in $P$-wave neutron superfluids in neutron stars, where it supports singly quantized vortices \cite{Sauls:1978lna, Muzikar:1980as, Sauls:1982ie, Masuda:2015jka, Masaki:2019rsz} and half-quantum vortices \cite{Masuda:2016vak, Masaki:2021hmk, Kobayashi:2022moc, Kobayashi:2022dae}.}

The order parameter for the $D_4$-BN phase is:
\begin{align}
    A \sim {\rm diag} (1,-1,0),
\end{align}
and the corresponding order parameter manifold is:
\begin{eqnarray}
    \frac{G}{H} = \frac{U(1) \times \SO(3)}{D_4} \simeq \frac{U(1) \times SU(2)}{D_4^{\ast}},
\end{eqnarray}
with $D_4^{\ast}$ being a covering group. The non-Abelian fundamental group is:
\begin{align}
    \pi_1 \left( \frac{U(1) \times SU(2)}{D_4^{\ast}}\right) \cong \mathbb{Z} \times_h D_4^{\ast},
    \label{conj-4}
\end{align}
and contains sixteen elements for each $N$:
\begin{align}
    \left\{ \begin{array}{l} \displaystyle (N, \pm \bm{1}_2), \\ \displaystyle (N, \pm i \sigma_x), (N, \pm i \sigma_y), \\ \displaystyle (N, \pm i \sigma_z), \\ \displaystyle \left(N + \frac{1}{2}, \pm C_4\right), \left(N + \frac{1}{2}, \pm i \sigma_x C_4\right), \\ \displaystyle \left(N + \frac{1}{2}, \pm C_4^{-1} \right), \left(N + \frac{1}{2}, \pm i \sigma_x (C_4)^{-1} \right) \end{array} \right\},
    \label{conj-5}
\end{align}
where the notation follows Eq.~(\ref{cyclic-elements}). Here, $C_4 \equiv e^{i \frac{\pi}{4}\sigma_z} = (1/\sqrt{2})(\bm{1}_2 + i \sigma_z)$ satisfies $C_4^4 = -\bm{1}_2$.

The conjugacy classes of Eq.~(\ref{conj-5}) consist of seven elements for each $N$, as shown in Table~\ref{table-spin2}. These elements describe:
(I) integer vortices ($N=0$ corresponds to the vacuum),
(II) spin vortices with $2\pi$ rotations,
(III) and (IV) spin vortices with $\pi$ rotations around the $x$, $y$, and $z$ axes, and
(V)--(VII) non-Abelian half-quantum vortices (HQVs).
The stability of the HQVs is as follows. Vortices (V) and (VII) with $N=0$ ($\kappa = 1/2$) and $N=-1$ ($\kappa = -1/2$) are stable and can be considered elementary. The other vortices are unstable, and there are no marginally stable vortices.

%%%
\section{Classification of vortex knots and links}\label{sec:classification}

Here, we exhaust all possible mutually non-commutative oriented positive braids (MNCOPBs) and mutually non-identical oriented positive braids (MNIOPBs) in spin-2 BECs. By definition, there are only finitely many MNCOPBs and MNIOPBs (up to circulation). A complete list of these can be obtained by simple enumeration.

\begin{proposition}
In cyclic spin-2 BECs, there are exactly eight MNCOPBs whose colors correspond to the elements of conjugacy classes (IV) and (VII), for up to two-component links. These braids are trefoil knots, with specific colors as shown in the left panel of Fig.~\ref{fig:mnob_cyclic}. The colorings $\begin{pmatrix} a & b  \\ c & \end{pmatrix}$ correspond to one of the following:
\begin{enumerate}
    \item 
    $\begin{pmatrix}  \left(N+\frac{1}{3}, -i \sigma_z C_3 \right) &
      \left(N+\frac{1}{3}, -i \sigma_y C_3 \right)\\
      \left(N+\frac{1}{3}, -i \sigma_x C_3\right) & 
       \end{pmatrix}$, 
    \item 
    $\begin{pmatrix} \left(N+\frac{1}{3}, -i \sigma_y C_3 \right) &  \left(N+\frac{1}{3}, -i \sigma_z C_3 \right)  \\
     \left(N+\frac{1}{3}, C_3\right)  &
       \end{pmatrix}$, 
    \item 
    $\begin{pmatrix} \left(N+\frac{1}{3}, C_3\right) & 
      \left(N+\frac{1}{3}, -i \sigma_x C_3\right) \\
       \left(N+\frac{1}{3}, -i \sigma_y C_3 \right) &
       \end{pmatrix}$, 
    \item 
    $\begin{pmatrix} \left(N+\frac{1}{3}, C_3\right) & 
     \left(N+\frac{1}{3}, -i \sigma_z C_3 \right) \\ 
      \left(N+\frac{1}{3}, -i \sigma_x C_3\right) &
       \end{pmatrix}$, 
    \item 
    $\begin{pmatrix} \left(N-\frac{1}{3}, (-i \sigma_z C_3)^{-1}\right) &
      \left(N-\frac{1}{3}, (-i \sigma_y C_3)^{-1}\right) \\
      \left(N-\frac{1}{3}, (-i \sigma_x C_3)^{-1}\right) &
       \end{pmatrix}$, 
    \item 
    $\begin{pmatrix} \left(N-\frac{1}{3}, (C_3)^{-1} \right) & 
      \left(N-\frac{1}{3}, (-i \sigma_x C_3)^{-1}\right) \\
      \left(N-\frac{1}{3}, (-i \sigma_y C_3)^{-1}\right) &
       \end{pmatrix}$, 
    \item 
    $\begin{pmatrix} \left(N-\frac{1}{3}, (-i \sigma_z C_3)^{-1}\right) &
      \left(N-\frac{1}{3}, (C_3)^{-1} \right) \\ 
      \left(N-\frac{1}{3}, (-i \sigma_y C_3)^{-1}\right) &
       \end{pmatrix}$, 
    \item 
    $\begin{pmatrix} \left(N-\frac{1}{3}, (-i \sigma_x C_3)^{-1}\right) &
      \left(N-\frac{1}{3}, (C_3)^{-1} \right) \\
      \left(N-\frac{1}{3}, (-i \sigma_z C_3)^{-1}\right) &
       \end{pmatrix}$.
\end{enumerate}
The four colorings 1--4 (and similarly 5--8) are equivalent to each other up to simultaneous conjugation.

Additionally, there are some incoherently oriented MNCOPBs. These include the $(2,4)$-torus link, whose specific colors are as shown in the right panel of Fig.~\ref{fig:mnob_cyclic}. The colorings $\begin{pmatrix} d & e \\ f & g \end{pmatrix}$ are one of the following:
\begin{enumerate}
    \item 
    $\begin{pmatrix}  \left(N+\frac{1}{3}, C_3\right) &
      \left(N+\frac{1}{3}, -i \sigma_x C_3\right)\\  
      \left(N+\frac{1}{3}, -i \sigma_y C_3\right)&
      \left(N+\frac{1}{3}, -i \sigma_y C_3\right)
       \end{pmatrix}$, 
    \item 
    $\begin{pmatrix} \left(N+\frac{1}{3}, C_3\right)&
      \left(N+\frac{1}{3}, -i \sigma_y C_3\right)\\
      \left(N+\frac{1}{3}, -i \sigma_x C_3 \right)&
      \left(N+\frac{1}{3}, -i \sigma_z C_3\right)
       \end{pmatrix}$, 
    \item 
    $\begin{pmatrix} \left(N+\frac{1}{3}, C_3\right)& 
      \left(N+\frac{1}{3}, -i \sigma_z C_3\right)\\  
      \left(N+\frac{1}{3}, -i \sigma_z C_3 \right)&
      \left(N+\frac{1}{3}, -i \sigma_x C_3\right)
       \end{pmatrix}$, 
    \item 
    $\begin{pmatrix} \left(N+\frac{1}{3}, -i \sigma_x C_3\right)&
      \left(N+\frac{1}{3}, C_3 \right)\\
      \left(N+\frac{1}{3}, -i \sigma_y C_3 \right)& 
      \left(N+\frac{1}{3},  -i \sigma_y C_3 \right)
       \end{pmatrix}$, 
    \item 
    $\begin{pmatrix} \left(N+\frac{1}{3}, -i \sigma_x C_3\right)&
      \left(N+\frac{1}{3}, -i \sigma_z C_3\right)\\
      \left(N+\frac{1}{3}, C_3 \right)&
      \left(N+\frac{1}{3},  -i \sigma_y C_3 \right)
       \end{pmatrix}$, 
    \item 
    $\begin{pmatrix} \left(N+\frac{1}{3}, -i \sigma_z C_3 \right)&
      \left(N+\frac{1}{3}, C_3 \right)\\
      \left(N+\frac{1}{3}, -i \sigma_y C_3 \right)&
      \left(N+\frac{1}{3}, -i \sigma_x C_3 \right)
       \end{pmatrix}$. 
\end{enumerate}
The three colorings 1--3 (and similarly 4--6) are equivalent to each other up to simultaneous conjugation.
\end{proposition}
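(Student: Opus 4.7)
The plan is to perform an exhaustive enumeration. The finiteness comes from three observations: (i) by the classification in Methods Sec.~\ref{sec:spin-2BEC} and Table~\ref{table-spin2}, only the four elements of each of the stable hydrodynamic classes (IV) and (VII) (with $N=0$) can appear as arc labels; (ii) a positive braid whose closure has at most two components has a strictly bounded number of strands and crossings; and (iii) the arc labels on a closed positive braid are forced by the Wirtinger relation at each crossing (an over-arc $a$ conjugates the under-arc $b$ to $aba^{-1}$).

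I would first treat the coherently oriented case. Since coherent orientation fixes the sign of $\kappa=\pm 1/3$, every arc of the closure carries a label from a single class---either (IV) throughout or (VII) throughout. I would restrict attention to two-strand positive braids $\sigma_1^n$, because higher braid indices either exceed two components or force arc labels to repeat in ways incompatible with the MNC condition. Among the $\sigma_1^n$, the case $n=1$ is the unknot, $n=3$ is the trefoil with three arcs---just enough room for three mutually non-commuting class-(IV) (or class-(VII)) elements---and $n\ge 5$ forces repeated or commuting labels. Plugging the explicit representatives $C_3,\ -i\sigma_x C_3,\ -i\sigma_y C_3,\ -i\sigma_z C_3$ (recall $C_3=\tfrac{1}{2}(\bm{1}_2+i\sigma_x+i\sigma_y+i\sigma_z)$ and $C_3^3=-\bm{1}_2$) into the trefoil's Wirtinger system and solving by direct multiplication produces the four colorings 1--4; the mirror analysis with class (VII) gives colorings 5--8. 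A short computation using the natural action of $T\cong A_4$ on these four elements then shows that colorings 1--4 (resp.\ 5--8) form a single orbit under simultaneous conjugation.

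For the incoherent case, each closure component is allowed its own orientation. The smallest two-component candidate compatible with positivity is $\sigma_1^4$, whose closure is the $(2,4)$-torus link. Solving the Wirtinger system on its four arcs subject to MNC, again within class (IV), yields exactly the six colorings 1--6 listed, which split into two orbits of three under simultaneous conjugation. I would then rule out $\sigma_1^2$ (whose two arcs commute) and $\sigma_1^{2k}$ with $k\ge 3$ (which force label repetitions or extra components) to establish completeness at the two-component level.

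The main obstacle is the combinatorial bookkeeping: one must solve the crossing relations inside the non-Abelian group $\mathbb{Z}\times_h T^\ast$, verify MNC pairwise for each candidate labeling, and argue that every orbit under simultaneous conjugation has been counted exactly once. The computation is mechanical but error-prone; exploiting the symmetric action of $T^\ast$ on $\{C_3,\ -i\sigma_x C_3,\ -i\sigma_y C_3,\ -i\sigma_z C_3\}$ to collapse equivalent cases is where most of the efficiency lies, and rigorously excluding the higher-$n$ braid closures from producing additional MNCOPBs is the step most prone to oversight.
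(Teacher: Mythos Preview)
Your approach is essentially the same as the paper's: the authors state only that ``by definition, there are only finitely many MNCOPBs and MNIOPBs (up to circulation)'' and that ``a complete list of these can be obtained by simple enumeration,'' giving no further argument. Your outline in fact supplies considerably more detail than the paper does---the Wirtinger-relation bookkeeping, the reduction to two-strand braids, and the orbit analysis under simultaneous conjugation are all left implicit in the original---so the proposal is correct and aligned with, though more explicit than, the paper's own justification.
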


\begin{figure}[htb]
  \centering
  \begin{picture}(400,0)(0,0)
\put(47,2){$a$}
\put(3,-65){$b$}
\put(88,-65){$c$}

\put(143,-5){$d$}
\put(143,-75){$e$}
\put(213,-75){$f$}
\put(213,-5){$g$}
\end{picture}
\includegraphics[height=2.5cm]{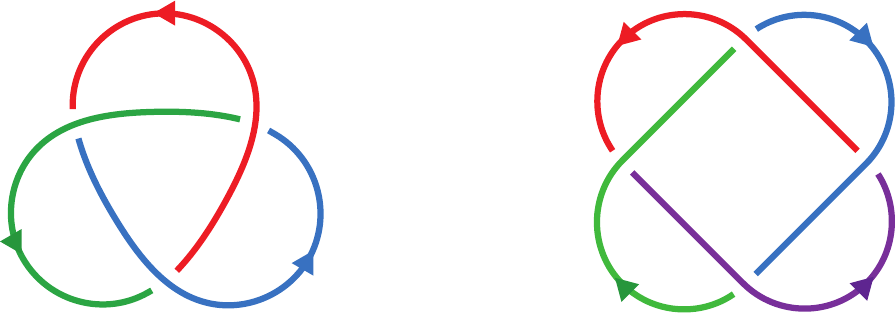}
  \caption{
  \label{fig:mnob_cyclic}
  The MNCOPBs (left) 
  and incoherently oriented MNCPB (right)}
  in cyclic spin-$2$ BEC.
\end{figure}

%%%%%%%%%%%%%%%%%%%%%%
%%%%%%%%%%%%%%%%%%%%%%

\begin{proposition}
In the $D_4$-BN phase of spin-2 BECs, there are no MNCOPBs. However, there are exactly four MNIOPBs whose colors correspond to elements of conjugacy classes (V) and (VII). Additionally, there are six MNIOPBs whose colors also correspond to elements of conjugacy classes (V) and (VII). All of these braids form a $(2,4)$-torus link, with specific colorings as shown in Fig.~\ref{fig:mnob_bn}. The colors $\begin{pmatrix} a & b \\ c & d \end{pmatrix}$ correspond to one of the following:
\begin{enumerate}
    \item 
    $\begin{pmatrix}  \left(N+\frac{1}{2}, C_4\right) & 
      \left(N+\frac{1}{2}, i \sigma_x C_4\right)\\
      \left(N+\frac{1}{2}, C_4^{-1} \right)&
      \left(N+\frac{1}{2}, i \sigma_x (C_4)^{-1} \right)
    \end{pmatrix} $, 
    \item 
    $\begin{pmatrix}  \left(N+\frac{1}{2}, C_4\right) &
    \left(N+\frac{1}{2}, (i \sigma_x C_4)^{-1} \right) \\ 
    \left(N+\frac{1}{2}, C_4^{-1} \right)&
    \left(N+\frac{1}{2}, ( i \sigma_x (C_4)^{-1})^{-1} \right) 
    \end{pmatrix} $, 
    \item 
    $\begin{pmatrix}  \left(N+\frac{1}{2}, C_4\right) &
    \left(N+\frac{1}{2}, ( i \sigma_x (C_4)^{-1})^{-1} \right) \\
    \left(N+\frac{1}{2}, C_4^{-1} \right)&
    \left(N+\frac{1}{2}, i \sigma_x C_4\right) 
    \end{pmatrix} $, 
    \item 
    $\begin{pmatrix} 
    \left(N+\frac{1}{2}, C_4\right) & 
    \left(N+\frac{1}{2}, i \sigma_x (C_4)^{-1} \right)\\
    \left(N+\frac{1}{2}, C_4^{-1} \right) &
    \left(N+\frac{1}{2}, (i \sigma_x C_4)^{-1} \right) 
    \end{pmatrix} $, 
    \item 
    $\begin{pmatrix} 
    \left(N+\frac{1}{2}, i \sigma_x C_4 \right) & 
    \left(N+\frac{1}{2}, i \sigma_x (C_4)^{-1} \right)\\
    \left(N+\frac{1}{2}, -i \sigma_x C_4 \right) &
    \left(N+\frac{1}{2}, -i \sigma_x (C_4)^{-1} \right) 
    \end{pmatrix} $, 
    \item 
    $\begin{pmatrix} 
    \left(N+\frac{1}{2}, i \sigma_x C_4 \right) & 
    \left(N+\frac{1}{2}, -i \sigma_x (C_4)^{-1} \right)\\
    \left(N+\frac{1}{2}, -i \sigma_x C_4 \right) &
    \left(N+\frac{1}{2}, i \sigma_x (C_4)^{-1} \right) 
    \end{pmatrix} $.
\end{enumerate}
Colorings 1–4 (and similarly 5–6) are equivalent to each other up to simultaneous conjugation.
\end{proposition}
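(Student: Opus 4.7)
The plan is to reduce the proposition to a finite enumeration of colorings of positive braid closures by the stable hydrodynamic elements of $K = \mathbb{Z} \times_h D_4^\ast$ drawn from conjugacy classes (V) and (VII), using the Wirtinger rule at each positive crossing and then applying the MNC and MNI filters in turn. The relevant color palette is the six-element set
\[
\mathcal S = \{\,(N{+}\tfrac12, C_4),\ (N{+}\tfrac12, C_4^{-1}),\ (N{+}\tfrac12, \pm i\sigma_x C_4),\ (N{+}\tfrac12, \pm i\sigma_x C_4^{-1})\,\},
\]
and I would begin by checking, via the identities $C_4^2 = i\sigma_z$ and the Pauli algebra, that $\mathcal S$ is closed under conjugation by its own elements, so that positive-braid colorings valued in $\mathcal S$ exist in principle.

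For the non-existence of MNCOPBs, the key group-theoretic input is that the centralizer of $C_4$ in $D_4^\ast$ is the full cyclic subgroup $\langle C_4 \rangle$ of order $8$, and every element of $\mathcal S$ has an analogous cyclic centralizer containing at least two other elements of $\mathcal S$ (its negative and its inverse). I would then show that any positive-braid coloring by $\mathcal S$ of a knot/link closure must violate MNC. The main case is a $\sigma_i^2$ subword: the two consecutive crossings produce arcs colored $h$, $ghg^{-1}$, and $g^2 h g^{-2}$ on the side of strand $i+1$, and since $g^2 = \pm i\sigma_z$ for every $g \in \mathcal S$, a direct calculation shows that at least one pair among $\{g,\,h,\,ghg^{-1},\,g^2 h g^{-2}\}$ commutes. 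For positive braids avoiding $\sigma_i^2$, one observes that the closure components then carry colors lying in a cyclic (hence abelian) subgroup of $D_4^\ast$, so MNC fails automatically.

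For the characterization of the ten MNIOPBs on the $(2,4)$-torus link, I would fix the braid word $\sigma_1^4 \in B_2$, label the four arcs $(a,b,c,d)$ as in Fig.~\ref{fig:mnob_bn}, and write down the four Wirtinger relations together with the closure consistency equation. Enumerating admissible assignments in $\mathcal S^4$ (a universe of at most $6^4 = 1296$ tuples), filtering by MNI (no two of $a,b,c,d$ coincide), and then quotienting by the simultaneous conjugation action of $D_4^\ast$ on $4$-tuples, yields exactly the ten colorings listed, partitioned into the equivalence classes $\{1,2,3,4\}$ and $\{5,6\}$ under simultaneous conjugation.

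The main obstacle is the uniform treatment of MNCOPB non-existence across all braid widths and shapes. The $\sigma_i^2$ local obstruction is transparent, but verifying that positive braids without any $\sigma_i^2$ subword also fail to admit an MNC coloring requires a structural argument about how conjugation by elements of $\mathcal S$ propagates through longer braid words, together with careful bookkeeping of which conjugate subsets of the cyclic centralizers in $D_4^\ast$ the arc colors can land in. The enumeration step for $(2,4)$-torus link colorings, by contrast, is mechanical once the Wirtinger equations are in hand.
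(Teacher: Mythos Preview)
The paper's proof is a single sentence: ``A complete list of these can be obtained by simple enumeration.'' The finiteness is immediate once you note that MNI forces all arc colors to be distinct elements of the six-element palette $\mathcal S$, so any MNIOPB (and hence any MNCOPB) has at most six crossings; one then checks the finitely many positive braids with at most six crossings directly. Your proposal is more ambitious, seeking a structural reason for the non-existence of MNCOPBs, but the sketch has a genuine gap.

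Specifically, your claim that ``for positive braids avoiding $\sigma_i^2$, the closure components carry colors lying in a cyclic subgroup of $D_4^\ast$'' is unjustified and appears false for braids on three or more strands. A word such as $(\sigma_1\sigma_2)^k$ in $B_3$ contains no square of a generator, yet Wirtinger propagation through such a braid need not confine arc colors to a single cyclic subgroup of $D_4^\ast$. You correctly flag this as the main obstacle, but the sketch you give does not close it. The honest fix is exactly the brute-force route the paper takes: bound the crossing number by $|\mathcal S|=6$, list all positive braid words of that length on at most six strands, compute all $\mathcal S$-colorings of each, and filter by MNC and MNI. Your local $\sigma_i^2$ observation (that $g^2 = \pm i\sigma_z$ forces a commuting pair among four consecutive arcs) is a useful shortcut that prunes most of the 2-strand search, but it does not replace the enumeration for wider braids.

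A minor point: you refer to ``the ten colorings listed,'' presumably by adding the ``four'' and ``six'' in the proposition's prose, but the displayed list contains exactly six colorings (items 1--6). Colorings 1--4 mix classes (V) and (VII) while colorings 5--6 lie entirely in class (VII); the wording of the proposition is awkward, but your enumeration should reproduce the six items actually displayed, not the prose count.
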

\begin{figure}[htb]
  \centering
  \begin{picture}(400,0)(0,0)
\put(73,-5){$a$}
\put(73,-75){$b$}
\put(148,-75){$c$}
\put(148,-5){$d$}
\end{picture}
\includegraphics[height=2.5cm]{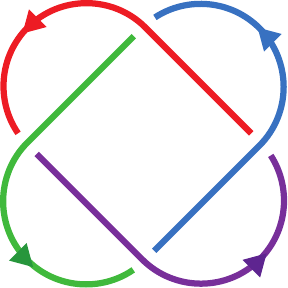}
  \caption{
  \label{fig:mnob_bn}
  The MNIOPBs
  in $D_4$-BN spin-$2$ BEC.}
\end{figure}

%%%%%%%%%%%%%%%%%%%%%%%%%%
\appendix

%%%
\section{Mathematical backgrounds}
\label{sec:math}
Unless otherwise stated, the strings of braids are oriented in the same direction.

\begin{theorem}[Alexander, cf.\ {\cite[Theorem 2.3]{Kassel_2008}}]
Every oriented link is isotopic to the closure of a braid.
\end{theorem}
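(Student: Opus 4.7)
The plan is to follow Alexander's classical argument. Given an oriented link $L$ presented as a diagram in the plane $\mathbb{R}^2$, I first fix a basepoint $O \in \mathbb{R}^2$ disjoint from the diagram, together with the angular coordinate $\theta$ about $O$. I call the diagram \emph{braided about $O$} if, along every edge traversed in its orientation, the angle $\theta$ is strictly increasing. In that case, cutting $\mathbb{R}^2$ along a ray from $O$ that misses the crossings and ``unrolling'' the annular region in which the diagram lies produces a braid whose closure is precisely $L$; this reduction step is straightforward.

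The main work is to modify an arbitrary diagram into one that is braided about $O$. Call an edge \emph{good} if $\theta$ increases along its orientation and \emph{bad} otherwise. For any bad edge $e$, I would invoke \emph{Alexander's trick}: subdivide $e$ into pieces small enough that each lies in a narrow angular wedge at $O$. Replace each small bad piece $\overline{PQ}$, which runs clockwise around $O$, by an arc from $P$ to $Q$ that sweeps counterclockwise around $O$ the long way. This replacement arc is arranged to pass entirely above every other strand of the diagram, which is realised by an ambient isotopy of $L$ in $\mathbb{R}^3$: lift the small arc off the plane, slide it over the rest of the link, and drop it back down on the other side of $O$.

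After the replacement, the new arc has strictly increasing $\theta$ along its orientation; although it introduces new crossings with the other strands, each such crossing only subdivides an already good edge into two good sub-edges, since monotonicity of $\theta$ is inherited by subdivision. Hence the number of bad edges strictly decreases with each application, so finitely many iterations of the trick yield a diagram braided about $O$. Combined with the first paragraph, this realises $L$ as the closure of a braid.

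The hard part is the bookkeeping in the inductive step. One has to check that the push-across move is a bona fide ambient isotopy rather than a singular transition that could alter the link type, and that the chosen subdivision and replacement do not accidentally turn some previously good edge bad. Both points are handled by taking the pieces of $e$ fine enough that each replacement arc lies in a thin tubular neighbourhood of a fixed arc in the complement of the remaining strands; this guarantees that all newly created crossings sit on arcs whose $\theta$-monotonicity is built in by construction, and that each step reduces a well-defined complexity measure (the number of bad edges), closing the induction.
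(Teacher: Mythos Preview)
Your outline is the classical Alexander argument and is essentially correct, but there is nothing to compare it against: the paper does not prove this theorem. It merely states it with a reference to \cite[Theorem~2.3]{Kassel_2008} and uses it as a black box. So your proposal supplies strictly more than the paper does on this point.
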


A braid $\beta$ is said to be \emph{positive} if $\beta$ admits a diagram consisting solely of positive crossings. We are interested in an oriented link that is isotopic to the closure of a positive braid, referred to as a \emph{positive braid link}, while its mirror image is called a \emph{negative braid link}.

For example, the oriented Hopf link $H_+$ (resp.\ $H_-$) is a positive (resp.\ negative) braid link. A $(2,4)$-torus link with the usual orientation is a positive braid link, but when the orientation of one component is reversed, the resulting link is neither a positive nor a negative braid link. Indeed, the Jones polynomials of the link and its mirror image do not satisfy the condition given in \cite[Theorem~1]{Stoimenow_2005}.

\label{sec:nilpotent}
Let $G$ be a group, and let $H$ and $K$ be subgroups of $G$. We denote by $[H,K]$ the subgroup of $G$ generated by commutators $[x,y] = xyx^{-1}y^{-1}$ for $x \in H$ and $y \in K$. For a positive integer $n$, we define the subgroup $\Gamma_n G$ inductively by $\Gamma_1 G = G$ and $\Gamma_{n+1} G = [\Gamma_n G, G]$. A group $G$ is said to be \emph{nilpotent} if $\Gamma_n G = {1}$ for some $n$. In particular, $\Gamma_2 G = {1}$ is equivalent to $G$ being Abelian.

\begin{lemma}
\label{lem:G3G2}
Let $G$ be a group normally generated by a single element. Then, $\Gamma_n G = \Gamma_2 G$ for all $n \geq 2$.
\end{lemma}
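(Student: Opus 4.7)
The plan is to show that $\Gamma_3 G = \Gamma_2 G$; the general statement then follows by an immediate induction, since $\Gamma_{n+1}G = [\Gamma_n G, G] = [\Gamma_2 G, G] = \Gamma_3 G$ once the $n=2$ case is established. Equivalently, I want to show that the two-step nilpotent quotient $\bar{G} := G/\Gamma_3 G$ is abelian, because that is exactly the statement $\Gamma_2 G \subseteq \Gamma_3 G$ (the reverse inclusion being automatic).

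First I would exploit the normal-generation hypothesis to control the abelianization. Let $t \in G$ be an element whose normal closure is all of $G$. In the abelianization $G^{\mathrm{ab}} = G/\Gamma_2 G$ every subgroup is normal, so the normal closure of $\bar t$ coincides with the ordinary subgroup $\langle \bar t \rangle$ it generates. Since this normal closure must equal $G^{\mathrm{ab}}$, the quotient $G/\Gamma_2 G$ is cyclic, generated by the image of $t$.

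Next I would use the standard fact that $\Gamma_2 G/\Gamma_3 G$ lies in the center of $\bar G$: this is built into the definition $\Gamma_3 G = [\Gamma_2 G, G]$, which says precisely that every element of $\Gamma_2 G$ commutes with every element of $G$ modulo $\Gamma_3 G$. Thus $\bar G$ admits a central subgroup $Z := \Gamma_2 G / \Gamma_3 G$ with cyclic quotient $\bar G / Z \cong G/\Gamma_2 G$. Now I invoke the elementary lemma that any group with a central subgroup of cyclic index is abelian: pick $g \in \bar G$ whose image generates $\bar G / Z$, write arbitrary elements as $g^m z_1$ and $g^n z_2$ with $z_i \in Z$, and note that they commute because powers of $g$ commute and the $z_i$ are central. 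Hence $\bar G$ is abelian, i.e.\ $\Gamma_2 G \subseteq \Gamma_3 G$, completing the base case.

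The argument is essentially bookkeeping once the cyclic abelianization is extracted, so there is no real obstacle; the only conceptual point to be careful about is the passage from \emph{normal} generation to ordinary generation, which works specifically because abelianness makes every subgroup normal, and this is what prevents the result from extending to groups merely generated (rather than normally generated) by a single conjugacy class.
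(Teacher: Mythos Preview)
Your proof is correct and takes a genuinely different route from the paper's. The paper argues by direct commutator calculus: writing each conjugate as $g\mu g^{-1}=[g,\mu]\mu$, it expands $[g\mu g^{-1},h\mu h^{-1}]$ via the identities $[x,yz]=[x,y]\,y[x,z]y^{-1}$ and $[x,y^{-1}]=y^{-1}[y,x]y$ and checks that every factor lands in $\Gamma_3 G$. Your argument is structural: from normal generation you extract that $G^{\mathrm{ab}}$ is cyclic, then note that $G/\Gamma_3 G$ is a central extension of this cyclic group by $\Gamma_2 G/\Gamma_3 G$ and invoke the standard ``central with cyclic quotient implies abelian'' lemma. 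Your approach is cleaner and avoids any identity-chasing; the paper's approach is more explicit and would, for instance, let one track exactly which iterated commutators appear. One minor quibble: your closing remark contrasting normal generation with being ``generated by a single conjugacy class'' is muddled, since a group generated by one conjugacy class \emph{is} normally generated by any element of that class---but this side comment does not affect the validity of your proof.
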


\begin{proof}
It suffices to show that $\Gamma_3 G = \Gamma_2 G$. By assumption, there exists an element $\mu \in G$ such that the set ${g \mu g^{-1} \mid g \in G}$ generates $G$. We recall the identities $[x, yz] = [x, y] y [x, z] y^{-1}$ and $[x, y^{-1}] = y^{-1} [y, x] y$. Therefore, the proof reduces to showing that $[g \mu g^{-1}, h \mu h^{-1}] \in \Gamma_3 G$ for any $g, h \in G$. Since $[g \mu g^{-1}, h \mu h^{-1}] = [[g, \mu] \mu, [h, \mu] \mu]$ and $[\mu, \mu] = 1$, the above identities complete the proof.
\end{proof}

\begin{proposition}
\label{prop:nilpotent}
Let $G$ be a group normally generated by a single element, and let $N$ be a nilpotent group. Then, for any homomorphism $f \colon G \to N$, the image of $f$ is Abelian.
\end{proposition}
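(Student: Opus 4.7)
The plan is to combine Lemma \ref{lem:G3G2} with the definition of nilpotency. First I would observe that if $G$ is normally generated by a single element $\mu$, then the image $f(G) \subseteq N$ is normally generated (as a subgroup of itself) by $f(\mu)$; indeed, $\{f(g\mu g^{-1}) \mid g \in G\} = \{f(g)f(\mu)f(g)^{-1} \mid g \in G\}$ generates $f(G)$, and each element of $f(G)$ can be conjugated within $f(G)$. Thus $f(G)$ itself is a group normally generated by a single element, and Lemma \ref{lem:G3G2} applies to it.

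Next I would use the nilpotency of $N$. By hypothesis there exists $n$ with $\Gamma_n N = \{1\}$. Since $f(G)$ is a subgroup of $N$, one has $\Gamma_n f(G) \subseteq \Gamma_n N = \{1\}$, so $\Gamma_n f(G) = \{1\}$. Without loss of generality we may take $n \geq 2$.

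Finally, applying Lemma \ref{lem:G3G2} to $f(G)$ (which is legitimate by the first step), we obtain
\[
\Gamma_2 f(G) = \Gamma_n f(G) = \{1\},
\]
which is exactly the statement that $f(G)$ is Abelian.

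The only delicate point to verify carefully is the first step, namely that being normally generated by a single element is inherited by $f(G)$; this just needs the observation that conjugation by elements of $G$ descends under $f$ to conjugation by elements of $f(G)$, so no obstacle arises. The rest is a direct combination of the lemma with the definition of nilpotency, so there is no significant technical hurdle.
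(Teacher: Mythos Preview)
Your argument is correct. The only difference from the paper's proof is where Lemma~\ref{lem:G3G2} is applied: you first observe that $f(G)$ is itself normally generated by $f(\mu)$ and then invoke the lemma for $f(G)$, using that any subgroup of a nilpotent group is nilpotent (so $\Gamma_n f(G)\subseteq\Gamma_n N=\{1\}$). The paper instead applies Lemma~\ref{lem:G3G2} directly to $G$, obtaining $G/\Gamma_n G = G/\Gamma_2 G$, and then notes that $f$ factors through this abelian quotient because $\Gamma_n N=\{1\}$. Your route trades the factoring step for the (easy) verification that ``normally generated by one element'' passes to homomorphic images; the paper's route avoids that verification but uses the functoriality of the lower central series under $f$. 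Both are equally short and neither has any hidden obstacle.
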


\begin{proof}
By assumption, $\Gamma_n N = {1}$ for some $n$. If $n \leq 2$, we have $\Gamma_2 N = {1}$, meaning that $N$ is Abelian. Suppose $n \geq 3$. Then, Lemma~\ref{lem:G3G2} implies that $G/\Gamma_n G = G/\Gamma_2 G$, which is Abelian. Since $\Gamma_n N = {1}$, the homomorphism $f$ factors through a homomorphism $G/\Gamma_n G \to N$, and hence the image of $f$ is Abelian.
\end{proof}

Since the fundamental group $\pi_1(S^3 \setminus K)$ of the complement of a knot in the $3$-sphere $S^3$ is normally generated by a meridian, Proposition~\ref{prop:nilpotent} yields a result mentioned in a footnote in \cite[p.~6]{Brekke_1992}.
%%%%%%

\clearpage

\onecolumngrid

\section{Simulations}

We show details for videos.
In the video, we show isosurfaces of the spin density $\boldsymbol{S}^2 = 0.06$ in blue and the singlet-pair amplitude $|A_{20}|^2 = 0.06$ in green for the cyclic phase, and the spin density  $\boldsymbol{S}^2 = 0.03$ in blue and the singlet-trio amplitude $|A_{30}|^2 = 0.03$ in green for the $D_4$ nematic phase.

\subsection{}
\begin{minipage}[t]{0.4\linewidth}
    file name: 01\_C\_1\_1\_1.mp4 \\
    phase: cyclic \\
    dynamics: non-relativistic \\
    knot type: simple ring \\
    vortex type: non-hydrodynamic
\end{minipage}
\begin{minipage}[t]{0.6\linewidth}
    vortex charges:
    $% [inline block 0: 86 envs, 72711 chars -> data_tex | \begin{pmatrix}\displaystyle 0, i\sigma_x\end{pmatrix}$\\[5pt]     \begin{tikzpicture}...]

\end{minipage}

\end{document}